\numberwithin{equation}{section}%公式编号跟着章节
\newtheorem{theorem}{Theorem}[section]  % 按 section 编号
\newtheorem{definition}[theorem]{Definition}
\newtheorem{lemma}[theorem]{Lemma}
\newtheorem{example}[theorem]{Example} 
\newtheorem{corollary}[theorem]{Corollary}
\newtheorem{remark}[theorem]{Remark}
\newcommand{\rank}{\mathrm{rank}}
\begin{document}
\title{On twisted generalized Reed-Solomon codes with $\ell$ twists}
\author{Haojie Gu\footnote{Haojie Gu is with the School of Mathematical Sciences, Capital Normal University, Beijing, China, 100048. Email: 2200502051@cnu.edu.cn.}
\and Jun Zhang\footnote{Jun Zhang is with the School of Mathematical Sciences, Capital Normal University, Beijing, China, 100048. Email: junz@cnu.edu.cn.}
}

\date{}
\maketitle

\begin{abstract}
	
In this paper, we study a class of twisted generalized Reed-Solomon (TGRS) codes with general $\ell$ twists. A sufficient and necessary condition for the TGRS codes to be MDS or $\ell$-MDS ($\ell<\min\{k,n-k\}$) is determined. A sufficient and necessary condition that such a TGRS code is self-dual for $\ell\leq \lfloor\frac{k-1}{3}\rfloor$ is also presented. Finally, we give an explicit construction of self-dual TGRS codes. And examples of self-dual MDS TGRS codes for small $\ell$ are given.

\begin{flushleft}
	\textbf{Keywords:} Twisted generalized Reed-Solomon codes, Self-dual codes, MDS codes
\end{flushleft}
\end{abstract}

\section{Introduction}
Let $q$ be a power of the prime $p$, $\mathbb{F}_{q}$ be the $q$ elements finite field and $\mathbb{F}_{q}^{*}=\mathbb{F}_{q}\backslash\{0\}$. A linear code $\mathcal{C}\subseteq \mathbb{F}_{q}^n$ with dimension $k$ and minimum distance $d$ will be called a $[n,k,d]_{q}$ linear code. The well-known Singleton bound says that $d\leq n-k+1$ for any code $\mathcal{C}=[n,k,d]_{q}$. The non-negative integer $S(\mathcal{C})=n-k+1-d$ is called the Singleton defect of the code $\mathcal{C}$\cite{de1996almost}. If $S(\mathcal{C})=0$, then $\mathcal{C}$ is called a maximum distance separable (MDS) code. If $S(\mathcal{C})=1$, then $\mathcal{C}$ is called an almost-MDS (AMDS) code. If $S(\mathcal{C})=S(\mathcal{C}^\perp)=1$, then $\mathcal{C}$ is called a near-MDS (NMDS) code. More generally, if $S(\mathcal{C})=S(\mathcal{C}^{\perp})=m$, then $\mathcal{C}$ is called $m$-MDS. Generalized Reed-Solomon(GRS) codes are the most important MDS codes family as they can correct burst and provide high fidelity in CD players. In recent years, constructions of self-dual MDS codes via GRS codes become a hot topic~\cite{fang2019new,FZXF22,jin2018self,jin2016new,lebed2019construction,liu2021construction,zhang2020unified}.

The TGRS codes are generalizations of GRS codes and they were firstly introduced in \cite{beelen2017twisted}. Unlike RS codes, TGRS codes may not be MDS codes. The authors characterized the condition that a TGRS code is MDS and gave two explicit constructions in the paper~\cite{beelen2017twisted}. Afterwards, the properties of TGRS codes
and constructions of self-dual TGRS codes are studied extensively~\cite{betsumiya2003self,huang2021mds,lavauzelle2020cryptanalysis,liu2021construction,9828478,zhang2022class,zhu2021self}. In \cite{huang2021mds}, Huang et al. gave a sufficient and necessary condition that a TGRS code with a single twist is self-dual, and constructed some MDS or NMDS self-dual TGRS codes. In \cite{zhang2022class}, Zhang et al. studied the properties of a class of TGRS codes, such as self-dualness, NMDS or MDS property and so on.  In \cite{9828478}, Sui et al. determined a sufficient and necessary condition that a TGRS code with two twists is MDS. Then they gave a sufficient and necessary condition that a TGRS code with two twists is self-dual, and constructed some MDS, NMDS or $2$-MDS self-dual TGRS codes with two twists. In this paper, we generalize the results for general $\ell$ twists.

 This paper is organized as follows. In Section 2, we show some basic notations and results about TGRS codes. In Section 3, we determine a sufficient and necessary condition that a TGRS code with $\ell$ twists is MDS.   In Section 4,  we characterize the dual codes of TGRS codes and determine a sufficient and necessary condition that a TGRS code with $\ell$ twists is $\ell$-MDS for $\ell<\min\{k,n-k\}$. In section 5, we give a sufficient and necessary condition on self-dual TGRS codes with $\ell$ twists for $\ell\leq \lfloor\frac{k-1}{3}\rfloor$. Finally, we give an explicit construction of self-dual TGRS codes. Also, examples of self-dual MDS TGRS codes for small $\ell$ are given. In Section 6, we conclude our work.

\section{Preliminaries}
 Given a vector
$
\boldsymbol{\alpha}=\left(\alpha_{1}, \alpha_{2}, \ldots, \alpha_{n}\right) \in \mathbb{F}_{q}^{n},
$
where $\alpha_{1}, \alpha_{2}, \ldots, \alpha_{n}$ are distinct elements in $\mathbb{F}_{q}$, usually, $\alpha_{1}, \alpha_{2}, \ldots, \alpha_{n}$ are called evaluation points. Next, given another vector
$
\boldsymbol{v}=\left(v_{1}, v_{2}, \ldots, v_{n}\right) \in\left(\mathbb{F}_{q}^{*}\right)^{n},
$
the evaluation map associated with $\boldsymbol{\alpha}$ and $\boldsymbol{v}$ is defined as
$$
e v_{\boldsymbol{\alpha}, \boldsymbol{v}}: \mathbb{F}_{q}[x]  \mapsto \mathbb{F}_{q}^{n},f(x)\mapsto ev_{\boldsymbol{\alpha},\boldsymbol{v}}(f):=(v_{1}f(\alpha_{1}),v_{2}f(\alpha_{2}),\cdots,v_{n}f(\alpha_{n})).
$$
 In this sense, an $[n, k]$ generalized Reed-Solomon code $G R S_{k}(\boldsymbol{\alpha}, \boldsymbol{v})$ associated with $\boldsymbol{\alpha}$ and $\boldsymbol{v}$ is defined as
$$ GRS_{k}(\boldsymbol{\alpha},\boldsymbol{v}):=\left\{ev_{\boldsymbol{\alpha},\boldsymbol{v}}(f(x)):f(x)\in \mathbb{F}_{q}[x]_{k}\right\},
$$
where $\mathbb{F}_{q}[x]_{k}:=\left\{f(x)\in \mathbb{F}_{q}[x]:\deg(f(x))<k\right\}$. After adding some monomials (called twists) into different positions (called hooks) of each $f(x)$ in $\mathbb{F}_{q}[x]_{k}$, the GRS code can be generalized as follows:

\begin{definition}[\cite{beelen2018structural}]\label{defcont:1}
	For two positive integers $l, k$ and $l \leq k \leq n \leq q$, suppose that $h=\left(h_{1}, h_{2}, \ldots, h_{l}\right)$, where $0 \leq h_{i} \leq k-1$ are distinct, $t=\left(t_{1}, t_{2}, \ldots, t_{l}\right)$, where $0 \leq t_{i}<n-k$ are also distinct, and $\eta=\left(\eta_{1}, \eta_{2}, \ldots, \eta_{l}\right) \in \mathbb{F}_{q}^l$. Then
	\begin{equation*}
	\mathcal{S}=\left\{\sum_{i=0}^{k-1} f_{i} x^{i}+\sum_{j=1}^{l} \eta_{j} f_{h_{j}} x^{k+t_{j}}\,:\, \mbox{for all } f_{i} \in \mathbb{F}_{q}, 0 \leq i \leq k-1\right\}
	\end{equation*}
	is a $k$-dimensional subspace of  $\mathbb{F}_{q}[x]$ over $\mathbb{F}_{q}$. Furthermore, let $\boldsymbol{\alpha}=\left(\alpha_{1}, \alpha_{2}, \ldots,\right.$
	   $\alpha_{n}) \in \mathbb{F}_{q}^{n}$, where $\alpha_{i}, i=1,2, \ldots, n$ are distinct and $\boldsymbol{v}=\left(v_{1}, v_{2}, \ldots, v_{n}\right) \in\left(\mathbb{F}_{q}^{*}\right)^{n}$. The linear code
	\begin{equation*}
	\mathcal{C}=\left\{e v_{\boldsymbol{\alpha}, \boldsymbol{v}}(f(x))\,:\, f(x) \in \mathcal{S}\right\}
	\end{equation*}
	is called a twisted generalized Reed-Solomon (TGRS) code. 
\end{definition} 
In this paper, we shall
consider the case $\ell< \min\{k,n-k\}, {h}=(k-1, k-2,\cdots,k-\ell), {t}=(0,1,\cdots,\ell-1)$ and $\eta=\left(\eta_{1}, \eta_{2},\cdots,\eta_{\ell}\right) \in\left(\mathbb{F}_{q}^{*}\right)^{\ell}$, unless otherwise specified. Let
\begin{equation}\label{equ:2.1}
\mathcal{S}=\left\{\sum_{i=0}^{k-1} f_{i} x^{i}+\sum\limits_{i=0}^{\ell-1}\eta_{i+1}f_{k-\ell+i}x^{k+i}: \mbox { for all } f_{i} \in \mathbb{F}_{q},\ 0 \leq i \leq k-1\right\},
\end{equation}
$\boldsymbol{\alpha}=\left(\alpha_{1}, \alpha_{2}, \ldots, \alpha_{n}\right)$ and $\boldsymbol{v}=\left(v_{1}, v_{2}, \ldots, v_{n}\right)$, where $\alpha_{1}, \alpha_{2}, \ldots, \alpha_{n}$ are distinct elements of $\mathbb{F}_{q}$ and $v_{1}, v_{2}, \ldots, v_{n} \in \mathbb{F}_{q}^{*}$. Then we will focus on the following TGRS code:
\begin{equation}\label{equ:2.2}
\mathcal{C}=\left\{e v_{\boldsymbol{\alpha}, \boldsymbol{v}}(f(x))\,:\, f(x) \in \mathcal{S}\right\} .
\end{equation}

\section{On minimum distances of TGRS codes $\mathcal{C}$}

  In this section, we study the minimums distances of TGRS codes $\mathcal{C}$. Up to the equivalence of code, we may assume that $\boldsymbol{v}=\boldsymbol{1}$, i.e.,  $\mathcal{C}=e v_{\boldsymbol{\alpha},\boldsymbol{1}}(\mathcal{S})$. 
  Obviously, the code $\mathcal{C}$ has generator matrix
\begin{equation}\label{equ:3.1}
G=\begin{pmatrix}
1&\cdots&1\\
\alpha_{1}&\cdots&\alpha_{n}\\
\vdots&\vdots&\vdots\\
\alpha_{1}^{k-\ell-1}&\cdots&\alpha_{n}^{k-\ell-1}\\
\alpha_{1}^{k-\ell}+\eta_{1}\alpha_{1}^k&\cdots&\alpha_{n}^{k-\ell}+\eta_{1}\alpha_{n}^k\\
\alpha_{1}^{k-\ell+1}+\eta_{2}\alpha_{1}^{k+1}&\cdots&\alpha_{n}^{k-\ell+1}+\eta_{2}\alpha_{n}^{k+1}\\
\vdots&\vdots&\vdots\\
\alpha_{1}^{k-1}+\eta_{\ell}\alpha_{1}^{k+\ell-1}&\cdots&\alpha_{n}^{k-1}+\eta_{\ell}\alpha_{n}^{k+\ell-1}\\
\end{pmatrix}.
\end{equation}
Since the TGRS code $\mathcal{C}$ has a sub-code of the GRS code $GRS_{k+\ell}(\boldsymbol{\alpha},\boldsymbol{1})$, the minimum distance $d(\mathcal{C})\geq n-k-\ell +1.$ Together with the Singleton bound, we have 
\[
n-k-\ell +1\leq d(\mathcal{C})\leq n-k+1.
\]
In this section, we will determine  three cases:  $d(\mathcal{C})=n-k+1, d(\mathcal{C})=n-k$ or $d(\mathcal{C})=n-k-\ell+1$. 

The following lemma is straightforward but plays an important role in determining the condition for MDS TGRS code $\mathcal{C}$.
\begin{lemma}\label{lemconst:3.1}
	If $A_{t}=\begin{pmatrix}
	c_{0}&0&\cdots&0\\
	c_{1}&c_{0}&\cdots&0\\
	\vdots&\vdots&\vdots&\vdots\\
	c_{t}&c_{t-1}&\cdots&c_{0}
	\end{pmatrix}$ where $c_{0}=1, c_1, c_2, \cdots, c_t\in \mathbb{F}_q$, then $$A_{t}^{-1}=\begin{pmatrix}
	e_{0}&0&\cdots&0\\
	e_{1}&e_{0}&\cdots&0\\
	\vdots&\vdots&\vdots&\vdots\\
	e_{t}&e_{t-1}&\cdots&e_{0}
	\end{pmatrix}$$
	where $e_{0}=1$ and $e_{i}=-\sum\limits_{j=0}^{i-1}e_{j}c_{i-j},1\leq i\leq t$.
\end{lemma}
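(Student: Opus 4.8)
The plan is to exhibit the claimed inverse explicitly and verify it by a single direct multiplication, exploiting the fact that the matrices in question are lower-triangular Toeplitz matrices, hence completely determined by their first columns. Concretely, I would let $B_t$ denote the lower-triangular Toeplitz matrix whose first column is $(e_0, e_1, \ldots, e_t)^{\mathsf{T}}$ with $e_0 = 1$ and $e_i = -\sum_{j=0}^{i-1} e_j c_{i-j}$, and then show $A_t B_t = I_{t+1}$. Since $A_t$ is square, a one-sided inverse is automatically the two-sided inverse, so this suffices. Conceptually, lower-triangular Toeplitz matrices of size $t+1$ form a commutative algebra isomorphic to $\mathbb{F}_q[x]/(x^{t+1})$ via $A_t \mapsto \sum_{i=0}^{t} c_i x^i$, and the whole statement amounts to inverting a power series with unit constant term; but the bare matrix computation is just as short.

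Indexing rows and columns from $0$ to $t$, the entries are $(A_t)_{i,j} = c_{i-j}$ and $(B_t)_{i,j} = e_{i-j}$ for $i \geq j$, and $0$ otherwise. The key step is to compute, for $i \geq j$,
$$(A_t B_t)_{i,j} = \sum_{l=j}^{i} c_{i-l}\, e_{l-j} = \sum_{s=0}^{m} c_{m-s}\, e_{s},$$
where $m = i - j$ and $s = l - j$. This already shows the product is again lower-triangular Toeplitz, its $(i,j)$-entry depending only on $m$; write $d_m$ for this common value. For $m = 0$ we get $d_0 = c_0 e_0 = 1$. For $1 \leq m \leq t$, isolating the $s = m$ term and using $c_0 = 1$ gives $d_m = e_m + \sum_{s=0}^{m-1} e_s\, c_{m-s}$, which vanishes precisely because $e_m$ is defined by $e_m = -\sum_{s=0}^{m-1} e_s\, c_{m-s}$. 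Hence $d_0 = 1$ and $d_m = 0$ for $m \geq 1$, so $A_t B_t = I_{t+1}$ and $B_t = A_t^{-1}$.

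There is no real obstacle here: the only thing to get right is the bookkeeping of indices in the convolution sum and the matching of the reindexed sum with the defining recursion for the $e_i$. An equally valid alternative would be induction on $t$, using the block decomposition $A_t = \left(\begin{smallmatrix} A_{t-1} & \mathbf{0} \\ \mathbf{r} & 1 \end{smallmatrix}\right)$ with $\mathbf{r} = (c_t, \ldots, c_1)$ together with the Schur-type formula $A_t^{-1} = \left(\begin{smallmatrix} A_{t-1}^{-1} & \mathbf{0} \\ -\mathbf{r}\, A_{t-1}^{-1} & 1 \end{smallmatrix}\right)$, but this route requires separately verifying that the new bottom row preserves the Toeplitz pattern, so I would favor the direct convolution argument as the cleaner one.
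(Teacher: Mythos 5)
Your proof is correct. The paper itself offers no proof of this lemma---it is simply asserted as ``straightforward''---so there is nothing to diverge from; your direct convolution verification (showing $A_tB_t$ is lower-triangular Toeplitz with first-column entries $d_0=1$ and $d_m=e_m+\sum_{s=0}^{m-1}e_sc_{m-s}=0$ by the defining recursion, then invoking that a one-sided inverse of a square matrix is two-sided) is precisely the standard argument the authors leave implicit, and it is complete as written.
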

\begin{theorem}\label{theconst:3.2}
	Suppose that $3\leq k<n$ and $\alpha_{1},\alpha_{2}\cdots,\alpha_{n}$ are distinct elements of $\mathbb{F}_{q}$. Then $\mathcal{C}=ev_{\boldsymbol{\alpha},\boldsymbol{1}}(\mathcal{S})$ is MDS if and only if $(\eta_{1},\cdots,\eta_{\ell})\in\Omega$, where
	\begin{equation*}
	\begin{aligned}
	&\Omega=\left\{ (\eta_{1},\cdots,\eta_{\ell})\in\mathbb{F}_{q}^\ell:\mbox{for each k-subset I} \subseteq [n],\prod\limits_{i\in I}(x-\alpha_{i})=\sum\limits_{i=0}^{k}c_{i}x^{k-i},\mbox{let}\right.\\
	&g_{k-j}^{(t)}=-\sum\limits_{i=0}^{\min\{t,k-j\}}e_{t-i}c_{j+i},\mbox{where}\   e_{0},\cdots,e_{t}\ \mbox{are in Lemma~\ref{lemconst:3.1}}, 0\leq t<\ell,\\
	&1\leq j\leq\ell,\mbox{it holds}\\
	&\left.\left|\begin{array}{ccccc}
	1+\eta_{1}g_{k-\ell}^{(0)}&\eta_{1}g_{k-\ell+1}^{(0)}&\cdots&\eta_{1}g_{k-2}^{(0)}&\eta_{1}g_{k-1}^{(0)}\\
	\eta_{2}g_{k-\ell}^{(1)}&1+\eta_{2}g_{k-\ell+1}^{(1)}&\cdots&\eta_{2}g_{k-2}^{(1)}&\eta_{2}g_{k-1}^{(1)}\\
	\vdots&\vdots&\vdots&\vdots&\vdots\\
	\eta_{\ell}g_{k-\ell}^{(\ell-1)}&\eta_{\ell}g_{k-\ell+1}^{(\ell-1)}&\cdots&\eta_{\ell}g_{k-2}^{(\ell-1)}&1+\eta_{\ell}g_{k-1}^{(\ell-1)}\\
	\end{array}\right| \neq 0 \right\}.
	\end{aligned}
	\end{equation*}
\end{theorem}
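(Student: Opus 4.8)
The plan is to test the MDS property columnwise. Writing $[n]$ for the column index set, $\mathcal{C}$ is MDS if and only if every $k\times k$ submatrix $G_I$ of $G$ formed by a $k$-subset $I\subseteq[n]$ of columns is nonsingular. So fix such an $I$, set $P_I(x)=\prod_{i\in I}(x-\alpha_i)=\sum_{i=0}^k c_i x^{k-i}$ (a monic polynomial of degree $k$ whose coefficients are exactly the $c_i$ of the statement), and let $V_I=(\alpha_j^{m})_{0\le m\le k-1,\,j\in I}$ be the associated $k\times k$ Vandermonde matrix. Since the $\alpha_j$ are distinct, $\det V_I\neq 0$, and therefore $\det G_I\neq 0$ if and only if $\det(G_I V_I^{-1})\neq 0$. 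The whole argument reduces to computing this latter determinant.

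First I would rewrite the rows of $G_I$ in the basis given by the rows of $V_I$. The first $k-\ell$ rows of $G_I$ are already Vandermonde rows $(\alpha_j^{m})_{j\in I}$ with $0\le m\le k-\ell-1$, so in $G_I V_I^{-1}$ they become the standard basis vectors $\boldsymbol{e}_0,\dots,\boldsymbol{e}_{k-\ell-1}$. For the $\ell$ twisted rows I would use that every $\alpha_j$ with $j\in I$ is a root of $P_I$, so the high powers reduce: writing $x^{k+t}\equiv r_t(x)=\sum_{m=0}^{k-1}\rho_m^{(t)}x^{m}\pmod{P_I(x)}$ for $0\le t\le\ell-1$, one has $\alpha_j^{k+t}=\sum_{m}\rho_m^{(t)}\alpha_j^{m}$ for all $j\in I$. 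Hence the $t$-th twisted row $\big(\alpha_j^{k-\ell+t}+\eta_{t+1}\alpha_j^{k+t}\big)_{j\in I}$ has coordinate vector $\boldsymbol{e}_{k-\ell+t}+\eta_{t+1}\boldsymbol{\rho}^{(t)}$. Consequently $G_I V_I^{-1}$ is block lower triangular with top-left block $I_{k-\ell}$, and its determinant equals that of the bottom-right $\ell\times\ell$ block $I_\ell+D R$, where $D=\mathrm{diag}(\eta_1,\dots,\eta_\ell)$ and $R=(\rho^{(t)}_{k-\ell+r})_{0\le t,r\le\ell-1}$.

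It then remains to identify the reduction coefficients $\rho^{(t)}_{k-j}$ with the quantities $g^{(t)}_{k-j}$ of the statement; this is the technical heart and the step I expect to be the main obstacle. The clean route is through generating functions: setting $C(x)=\sum_{i=0}^k c_i x^{i}$, the recurrence $e_{i}=-\sum_{j=0}^{i-1}e_j c_{i-j}$ of Lemma~\ref{lemconst:3.1} is exactly the one defining the reciprocal power series $E(x)=\sum_{i\ge 0}e_i x^{i}=C(x)^{-1}$. Since $P_I(x)=x^{k}C(1/x)$, one gets $1/P_I(x)=\sum_{i\ge 0}e_i x^{-k-i}$, whence
\[
\frac{x^{k+t}}{P_I(x)}=\sum_{i=0}^{t}e_i x^{t-i}+\sum_{i>t}e_i x^{t-i},
\]
the first sum being the quotient and $r_t(x)=P_I(x)\sum_{i>t}e_i x^{t-i}$ the remainder. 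Reading off the coefficient of $x^{k-j}$ in this product and collapsing it with the identity $\sum_{a\ge 0}c_a e_{n-a}=\delta_{n,0}$ (which is just $C(x)E(x)=1$) yields $\rho^{(t)}_{k-j}=-\sum_{i=0}^{\min\{t,k-j\}}e_{t-i}c_{j+i}=g^{(t)}_{k-j}$, exactly as claimed.

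Putting the pieces together, $\det G_I\neq 0$ if and only if the displayed $\ell\times\ell$ determinant $\det(I_\ell+DR)$ is nonzero, and $\mathcal{C}$ is MDS precisely when this holds for every $k$-subset $I$, i.e.\ when $(\eta_1,\dots,\eta_\ell)\in\Omega$. The only delicate points to watch are the bookkeeping of index ranges (so that $c_a=0$ for $a\notin\{0,\dots,k\}$ and the truncations $\min\{t,k-j\}$ come out right) and the verification that the block-triangular reduction leaves the bottom-right block exactly in the normalized form $I_\ell+DR$; once the coefficient identity $\rho^{(t)}_{k-j}=g^{(t)}_{k-j}$ is in hand, the equivalence is immediate.
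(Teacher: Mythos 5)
Your proposal is correct and follows essentially the same route as the paper: reduce the MDS property to the nonvanishing of all $k\times k$ minors $\det G_I$, factor through the Vandermonde matrix $V_I$ so that $G_I V_I^{-1}$ is block lower triangular with bottom-right block $I_\ell + DR$, and identify the coefficients of $x^{k+t}\bmod \prod_{i\in I}(x-\alpha_i)$ with the stated $g^{(t)}_{k-j}$. The only divergence is presentational: you derive the coefficient identity from the reciprocal power series $E(x)=C(x)^{-1}$ and the Laurent expansion of $x^{k+t}/P_I(x)$, whereas the paper obtains the same quotient polynomial $\sum_{i=0}^{t}e_i x^{t-i}$ by explicit polynomial division together with Lemma~\ref{lemconst:3.1}, whose recurrence for the $e_i$ is exactly the power-series inversion you invoke.
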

\begin{proof}
	For simplicity, we firstly deal with the case $I=\{1,2,\cdots, k\}$. That is, consider the evaluation points $\alpha_{1},\alpha_{2},\cdots,\alpha_{k}$. Let $\prod\limits_{i=1}^{k}(x-\alpha_{i})=\sum\limits_{j=0}^{k}c_{j}x^{k-j}$. For $0\leq t<\ell$, let
	 $$\left(g_{0}^{(t)}, \ldots, g_{k-1}^{(t)}\right)
	=\left(\alpha_{1}^{k+t}, \ldots, \alpha_{k}^{k+t}\right)\cdot
	\left(\begin{array}{cccc}
	1 & 1 & \cdots & 1 \\
	\alpha_{1} & \alpha_{2} & \cdots & \alpha_{k} \\
	\vdots & \vdots & & \vdots \\
	\alpha_{1}^{k-1} & \alpha_{2}^{k-1} & \cdots & \alpha_{k}^{k-1}
	\end{array}\right)^{-1}.$$
	It means that $\sum\limits_{i=0}^{k-1}g_{i}^{(t)}\alpha_{j}^{i}=\alpha_{j}^{k+t},1\leq j\leq k, 0\leq t<\ell$. Therefore, $\alpha_{1}, \alpha_{2}, \ldots, \alpha_{k}$ are roots of the polynomial $f_{t}(x)=x^{k+t}-\sum\limits_{i=0}^{k-1} g_{i}^{(t)}x^{i}$. So there is $h_{t}(x)=\sum\limits_{i=0}^{t}a_{i}^{(t)}x^i$ such that 
	\begin{equation*}
	\left( \sum\limits_{i=0}^{t}a_{i}^{(t)}x^i\right)\cdot\left(\sum\limits_{j=0}^{k}c_{j}x^{k-j}\right)=h_{t}(x)\prod\limits_{i=1}^{k}(x-\alpha_{i})=x^{k+t}-\sum_{i=0}^{k-1} g_{i}^{(t)}x^{i}.
	\end{equation*}
	Comparing the coefficients of the leftmost side and the rightmost side of the above equation, we have \begin{equation*}
	\left\{
	\begin{array}{ll}
	(a_{0}^{(t)},a_{1}^{(t)},\cdots,a_{t}^{(t)})A_{t}=(0,0,\cdots,0,1),&0\leq t<\ell\\
	g_{k-j}^{(t)}=-\sum\limits_{i=0}^{\min\{t,k-j\}}a_{i}^{(t)}c_{i+j},&1\leq j\leq k-1
	\end{array}.\right.
	\end{equation*}
	By Lemma~\ref{lemconst:3.1}, we know  $$(a_{0}^{(t)},a_{1}^{(t)},\cdots,a_{t}^{(t)})=(0,0,\cdots,0,1)A_{t}^{-1}=(e_{t},e_{t-1},\cdots,e_{0})$$ and $$g_{k-j}^{(t)}=-\sum\limits_{i=0}^{\min\{t,k-j\}}e_{t-i}c_{i+j},\quad 0\leq t< \ell, 1\leq j< \ell.$$ 
	So for $I=\{1,2,\cdots,k\}$, we have {\small
	\begin{equation*}
   \begin{aligned}
     & \left|G_{I}\right|=\left|
	\begin{array}{ccc}
	1&\cdots&1\\
	\alpha_{1}&\cdots&\alpha_{k}\\
	\vdots&\vdots&\vdots\\
	\alpha_{1}^{k-\ell-1}&\cdots&\alpha_{k}^{k-\ell-1}\\
	\alpha_{1}^{k-\ell}+\eta_{1}\alpha_{1}^k&\cdots&\alpha_{k}^{k-\ell}+\eta_{1}\alpha_{k}^k\\
	\vdots&\vdots&\vdots\\
	\alpha_{1}^{k-1}+\eta_{\ell}\alpha_{1}^{k+\ell-1}&\cdots&\alpha_{k}^{k-1}+\eta_{\ell}\alpha_{k}^{k+\ell-1}\\
	\end{array}\right|\\
	&\overset{(1)}{=}\left|\begin{array}{ccc}
	1&\cdots&1\\
	\alpha_{1}&\cdots&\alpha_{k}\\
	\vdots&\vdots&\vdots\\
	\alpha_{1}^{k-\ell-1}&\cdots&\alpha_{k}^{k-\ell-1}\\
	(1+\eta_{1}g_{k-\ell}^{(0)})\alpha_{1}^{k-\ell}+\eta_{1}\sum\limits_{i=k-\ell+1}^{k-1}g_{i}^{(0)}\alpha_{1}^{i}&\cdots&(1+\eta_{1}g_{k-\ell}^{(0)})\alpha_{k}^{k-\ell}+\eta_{1}\sum\limits_{i=k-\ell+1}^{k-1}g_{i}^{(0)}\alpha_{k}^{i}\\
	\vdots&\vdots&\vdots\\
	(1+\eta_{\ell}g_{k-1}^{(\ell-1)})\alpha_{1}^{k-1}+\eta_{\ell}\sum\limits_{i=k-\ell}^{k-2}g_{i}^{(\ell-1)}\alpha_{1}^{i}&\cdots&(1+\eta_{\ell}g_{k-1}^{(\ell-1)})\alpha_{k}^{k-1}+\eta_{\ell}\sum\limits_{i=k-\ell}^{k-2}g_{i}^{(\ell-1)}\alpha_{k}^{i}
	\end{array}\right|\\
	\end{aligned}
\end{equation*}}
 {\small
	\begin{equation*}
		\begin{aligned}
	&=\left|\begin{array}{ccccc}
	\boldsymbol{I}_{k-\ell}&\boldsymbol{0}&\boldsymbol{0}&\cdots&\boldsymbol{0}\\
     \boldsymbol{0}&1+\eta_{1}g_{k-\ell}^{(0)}&\eta_{1}g_{k-\ell+1}^{(0)}&\cdots&\eta_{1}g_{k-1}^{(0)}\\
	\boldsymbol{0}&\eta_{2}g_{k-\ell}^{(1)}&1+\eta_{2}g_{k-\ell+1}^{(1)}&\cdots&\eta_{2}g_{k-1}^{(1)}\\
	\vdots&\vdots&\vdots&\vdots&\vdots\\
	\boldsymbol{0}&\eta_{\ell}g_{k-\ell}^{(\ell-1)}&\eta_{\ell}g_{k-\ell+1}^{(\ell-1)}&\cdots&1+\eta_{\ell}g_{k-1}^{(\ell-1)}\\
	\end{array}\right|\cdot\left| \begin{array}{cccc}
	1&1&\cdots&1\\
	\alpha_{1}&\alpha_{2}&\cdots&\alpha_{k}\\
	\alpha_{1}^2&\alpha_{2}^2&\cdots&\alpha_{k}^2\\
	\vdots&\vdots&\vdots&\vdots\\
	\alpha_{1}^{k-1}&\alpha_{2}^{k-1}&\cdots&\alpha_{k}^{k-1}
	\end{array}\right|\\
	&=\left|\begin{array}{cccc}
	1+\eta_{1}g_{k-\ell}^{(0)}&\eta_{1}g_{k-\ell+1}^{(0)}&\cdots&\eta_{1}g_{k-1}^{(0)}\\
	\eta_{2}g_{k-\ell}^{(1)}&1+\eta_{2}g_{k-\ell+1}^{(1)}&\cdots&\eta_{2}g_{k-1}^{(1)}\\
	\vdots&\vdots&\vdots&\vdots\\
	\eta_{\ell}g_{k-\ell}^{(\ell-1)}&\eta_{\ell}g_{k-\ell+1}^{(\ell-1)}&\cdots&1+\eta_{\ell}g_{k-1}^{(\ell-1)}\\
	\end{array}\right|\cdot\prod\limits_{1\leq j<i\leq k}(\alpha_{i}-\alpha_{j}).
	\end{aligned}
	\end{equation*}}
	The equality $(1)$ follows from: substitutions $\alpha_j^{k+t}=\sum\limits_{i=0}^{k-1}g_{i}^{(t)}\alpha_{j}^{i},\,j=1,2,\cdots,k, t=0,1,\cdots, \ell-1$; and the terms consisting of $\alpha_j^i \,(j=1,2,\cdots,k, i=0,1,\cdots, k-\ell-1)$ in the $k-\ell+1, k-\ell+2,\cdots, k$-th rows are absorbed by the first $k-\ell$ rows. 
	
	Thus, according to the generality of $I$, $\mathcal{C}=ev_{\boldsymbol{\alpha},\boldsymbol{1}}(\mathcal{S})$ is MDS if and only if all $k\times k$ minors of G are non-zero, if and only if $(\eta_{1},\cdots,\eta_{\ell})\in\Omega$.
\end{proof}

 For $\ell=2,3$, we obtain the following sufficient and necessary condition about $\mathcal{C}=ev_{\boldsymbol{\alpha},\boldsymbol{1}}(\mathcal{S})$ to be MDS:
\begin{corollary}\label{corcont:3.3}
	Suppose that $3\leq k<n,\ell=2$ and $\alpha_{1},\alpha_{2}\cdots,\alpha_{n}$ are distinct elements of $\mathbb{F}_{q}$. Then $\mathcal{C}=ev_{\boldsymbol{\alpha},\boldsymbol{1}}(\mathcal{S})$ is MDS if and only if for each k-subset $I\subseteq [n],\prod\limits_{i\in I}(x-\alpha_{i})=\sum\limits_{i=0}^{k}c_{i}x^{k-i}$, it holds $1+\eta_{2}(c_{1}^2-c_{2})-\eta_{1}c_{2}+\eta_{1}\eta_{2}(c_{2}^2-c_{1}c_{3})\neq 0$.
\end{corollary}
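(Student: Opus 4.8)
The plan is to specialize Theorem~\ref{theconst:3.2} to the case $\ell=2$ and evaluate the resulting $2\times 2$ determinant explicitly. By Theorem~\ref{theconst:3.2}, $\mathcal{C}=ev_{\boldsymbol{\alpha},\boldsymbol{1}}(\mathcal{S})$ is MDS if and only if, for every $k$-subset $I\subseteq[n]$ with $\prod_{i\in I}(x-\alpha_i)=\sum_{i=0}^{k}c_ix^{k-i}$, the $\ell\times\ell=2\times 2$ determinant
\[
\begin{vmatrix}
1+\eta_1 g_{k-2}^{(0)} & \eta_1 g_{k-1}^{(0)}\\
\eta_2 g_{k-2}^{(1)} & 1+\eta_2 g_{k-1}^{(1)}
\end{vmatrix}
\]
is nonzero. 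So the entire task reduces to identifying the four entries $g_{k-2}^{(0)},g_{k-1}^{(0)},g_{k-2}^{(1)},g_{k-1}^{(1)}$ and expanding.

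First I would compute the coefficients $e_0,e_1$ supplied by Lemma~\ref{lemconst:3.1} applied to $A_t$ built from $c_0=1,c_1,c_2,\dots$; the recursion $e_0=1$, $e_i=-\sum_{j=0}^{i-1}e_jc_{i-j}$ gives $e_0=1$ and $e_1=-c_1$ (only $t\in\{0,1\}$ occur, so $e_2$ is not needed here). Next I would substitute these into the defining formula $g_{k-j}^{(t)}=-\sum_{i=0}^{\min\{t,k-j\}}e_{t-i}c_{j+i}$ from the theorem, taking $(t,j)\in\{0,1\}\times\{1,2\}$ and using $k\ge 3$ so that $\min\{1,k-2\}=1$. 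This yields
\[
g_{k-1}^{(0)}=-c_1,\quad g_{k-2}^{(0)}=-c_2,\quad
g_{k-1}^{(1)}=c_1^2-c_2,\quad g_{k-2}^{(1)}=c_1c_2-c_3.
\]

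With these entries in hand, the last step is to expand the $2\times 2$ determinant as
\[
(1-\eta_1 c_2)\bigl(1+\eta_2(c_1^2-c_2)\bigr)+\eta_1\eta_2 c_1(c_1c_2-c_3)
\]
and simplify. The main (though entirely routine) obstacle is the bookkeeping in this final expansion: after multiplying out, the two cubic-in-$c$ cross terms $-\eta_1\eta_2 c_1^2 c_2$ and $+\eta_1\eta_2 c_1^2 c_2$ cancel, leaving exactly
\[
1+\eta_2(c_1^2-c_2)-\eta_1 c_2+\eta_1\eta_2(c_2^2-c_1c_3),
\]
which is the expression appearing in the statement. Since the condition ``this quantity $\neq 0$ for every $k$-subset $I$'' is precisely the defining membership condition $(\eta_1,\eta_2)\in\Omega$ of Theorem~\ref{theconst:3.2} when $\ell=2$, the corollary follows. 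I would make sure to record that the only place $k\ge 3$ is used is in the $\min\{t,k-j\}$ evaluations, guaranteeing the stated $g$-values are valid.
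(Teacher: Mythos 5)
Your proposal is correct and follows exactly the route the paper intends: the paper states this corollary as an immediate specialization of Theorem~\ref{theconst:3.2} to $\ell=2$ without writing out the computation, and your proof supplies precisely that computation. Your values $g_{k-1}^{(0)}=-c_1$, $g_{k-2}^{(0)}=-c_2$, $g_{k-1}^{(1)}=c_1^2-c_2$, $g_{k-2}^{(1)}=c_1c_2-c_3$ and the determinant expansion (with the cancellation of the $\eta_1\eta_2c_1^2c_2$ terms) all check out, recovering the stated condition $1+\eta_{2}(c_{1}^2-c_{2})-\eta_{1}c_{2}+\eta_{1}\eta_{2}(c_{2}^2-c_{1}c_{3})\neq 0$.
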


\begin{remark}\label{remcont:3.4}
	For $\ell=2$, the twists of the above corollary is different from \cite{9828478}, so we obtain a different necessary and sufficient condition comparing with the result of~\cite{9828478}.
\end{remark}

\begin{corollary}\label{corcont:3.4}
	Suppose that $5\leq k<n,\ell=3$ and $\alpha_{1},\alpha_{2}\cdots,\alpha_{n}$ are distinct elements of $\mathbb{F}_{q}$. Then $\mathcal{C}=ev_{\boldsymbol{\alpha},\boldsymbol{1}}(\mathcal{S})$ is MDS if and only if for each k-subset $I\subseteq [n],\prod\limits_{i\in I}(x-\alpha_{i})=\sum\limits_{i=0}^{k}c_{i}x^{k-i}$,  let
	$g_{k-j}^{(t)}=-\sum\limits_{i=0}^{t}e_{t-i}c_{j+i}$, where   $e_{0}=1,e_{1}=-c_{1},e_{2}=-c_{2}+c_{1}^2, 0\leq t\leq 2,1\leq j\leq 3$, it holds\begin{equation*}
	\left|
	\begin{array}{ccc}
	1+\eta_{1}g_{k-3}^{(0)}&\eta_{1}g_{k-2}^{(0)}&\eta_{1}g_{k-1}^{(0)}\\
	\eta_{2}g_{k-3}^{(1)}&1+\eta_{2}g_{k-2}^{(1)}&\eta_{2}g_{k-1}^{(1)}\\
	\eta_{3}g_{k-3}^{(2)}&\eta_{3}g_{k-2}^{(2)}&1+\eta_{3}g_{k-1}^{(2)}
	\end{array}
	\right|\neq 0
\end{equation*}
\end{corollary}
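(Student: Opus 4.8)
The plan is to obtain this corollary as the direct specialization of Theorem~\ref{theconst:3.2} to the case $\ell = 3$; indeed the statement is nothing more than that instance with the auxiliary coefficients $e_0, e_1, e_2$ written out explicitly. First I would invoke Theorem~\ref{theconst:3.2}, which already reduces the MDS property of $\mathcal{C} = ev_{\boldsymbol{\alpha},\boldsymbol{1}}(\mathcal{S})$ to the requirement that, for every $k$-subset $I \subseteq [n]$ with $\prod_{i\in I}(x-\alpha_i) = \sum_{i=0}^k c_i x^{k-i}$, the $\ell \times \ell$ determinant defining $\Omega$ is nonzero. Setting $\ell = 3$, this determinant becomes precisely the $3 \times 3$ determinant displayed in the corollary, with rows indexed by $t = 0, 1, 2$ and columns by $j = 3, 2, 1$ (i.e. $g_{k-3}^{(t)}, g_{k-2}^{(t)}, g_{k-1}^{(t)}$).

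Next I would compute the coefficients $e_0, e_1, e_2$ from the recursion in Lemma~\ref{lemconst:3.1}, namely $e_0 = 1$ and $e_i = -\sum_{j=0}^{i-1} e_j c_{i-j}$. This gives $e_1 = -e_0 c_1 = -c_1$ and $e_2 = -(e_0 c_2 + e_1 c_1) = -c_2 + c_1^2$, matching the values stated in the corollary. Substituting these into the expression $g_{k-j}^{(t)} = -\sum_{i=0}^{\min\{t,\,k-j\}} e_{t-i} c_{j+i}$ from Theorem~\ref{theconst:3.2} then produces exactly the quantities $g_{k-j}^{(t)}$ used here.

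The only point requiring genuine care is the range of the summation index. In Theorem~\ref{theconst:3.2} the upper limit is $\min\{t,\,k-j\}$, whereas the corollary writes $\sum_{i=0}^{t}$. For $\ell = 3$ we have $0 \le t \le 2$ and $1 \le j \le 3$, so $k - j \ge k - 3$; the hypothesis $k \ge 5$ then forces $k - j \ge 2 \ge t$ for every relevant pair $(t, j)$, whence $\min\{t,\,k-j\} = t$ throughout and the two summations coincide. With this observation in hand, the corollary follows verbatim from Theorem~\ref{theconst:3.2}.

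I expect no real obstacle here, since the argument is a specialization together with the routine bookkeeping for the $e_i$. The one item I would verify carefully is exactly the index-range simplification above: it is precisely the role of the hypothesis $k \ge 5$, which is strictly stronger than the bound $k \ge 4$ merely guaranteeing $\ell < \min\{k, n-k\}$, and which ensures that the truncation $\min\{t,\,k-j\}$ never actually bites for $\ell = 3$.
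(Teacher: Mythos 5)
Your proposal is correct and is exactly the intended argument: the paper offers no separate proof for this corollary, presenting it as the immediate specialization of Theorem~\ref{theconst:3.2} to $\ell=3$, which is what you carry out. Your computation of $e_1=-c_1$, $e_2=-c_2+c_1^2$ from Lemma~\ref{lemconst:3.1} and your observation that the hypothesis $k\geq 5$ is precisely what forces $\min\{t,k-j\}=t$ (so that the truncated sum in $\Omega$ can be written as $\sum_{i=0}^{t}$) are both accurate, with the latter being a point the paper leaves implicit.
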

	
	Next, we will give two explicit examples based on the above two corollaries.

\begin{example}\label{exacont:3.6}
	Let $q=11,\ell=2,$ and $\boldsymbol{\alpha}=(1,2,3,5,6,8,9,10)$. By Corollary~\ref{corcont:3.3}, $\mathcal{C}=ev_{\boldsymbol{\alpha},\boldsymbol{1}}(\mathcal{S})$ is an $[8,k,9-k]_q$ MDS code if and only if for each $k$-subset $I\subseteq [n],\prod\limits_{i\in I}(x-\alpha_{i})=\sum\limits_{i=0}^{k}c_{i}x^{k-i}$, it holds $1+\eta_{2}(c_{1}^2-c_{2})-\eta_{1}c_{2}+\eta_{1}\eta_{2}(c_{2}^2-c_{1}c_{3})\neq 0$, where $c_{1}=-\sum\limits_{i\in I}\alpha_{i},c_{2}=\sum\limits_{i,j\in I\atop i<j}\alpha_{i}\alpha_{j}$ and $c_{3}=-\sum\limits_{i,j,t\in I\atop i<j<t}\alpha_{i}\alpha_{j}\alpha_{t}$.
	By {MATLAB}, we get the following table of pairs $(\eta_{1},\eta_{2})$ of parameters for dimensions $k\in \{3,4,5\}$. 
%	\begin{equation*}

\begin{table}[H]
	\centering%\caption{
	%	Relationship about parameter $(\eta_{1},\eta_{2})$
	%	and MDS code with dimension $k(3\leq k\leq 5)$ for $\ell=2$.}	
	\begin{tabular}{|c|c|c|}
		\hline
		\multirow{2}[0]{*}{dimension $k$} & \multirow{2}[0]{*}{parameter $(\eta_{1},\eta_{2})$} & {\multirow{2}[0]{*}{MDS code}} \\
		& & \\
		\hline
		\multirow{2}[0]{*}{$k=3$}&$(0,0)$  & \multirow{2}[0]{*}{$[8,3,6]$}\\
		&$(2,9)$  & \\
		\hline
		\multirow{3}[0]{*}{$k=4$}&$(0,0)$  & \multirow{3}[0]{*}{$[8,4,5]$}\\
		&$(4,4)$  & \\
		&$(6,6)$  & \\
		\hline
		\multirow{2}[0]{*}{$k=5$}&$(0,0)$  & \multirow{2}[0]{*}{$[8,5,4]$}\\
		&$(9,10)$  & \\
		\hline
	\end{tabular}%
%	\label{tab:addlabel}%
\end{table}
Moreover, there are $14$ pairs $(\eta_{1},\eta_{2})$ of parameters to make $\mathcal{C}=ev_{\boldsymbol{\alpha},\boldsymbol{1}}(\mathcal{S})$ to be MDS codes $[8,6,3]$.  And there $70$ parameters $(\eta_{1},\eta_{2})$ to make $\mathcal{C}=ev_{\boldsymbol{\alpha},\boldsymbol{1}}(\mathcal{S})$ to be MDS codes $[8,7,2]$. %For example, set $\eta_{1}=6,\eta_{2}=6$ and let 
%\begin{equation*}
%	\mathcal{S}=\left\{\sum\limits_{i=0}^{3}f_{i}x^i+6f_{2}x^4+6f_{3}x^5:\mbox{for all}\ f_{i}\in\mathbb{F}_{11},0\leq i\leq 3\right\}.
%	\end{equation*} 
%	By corollary~\ref{corcont:3.3}, $\mathcal{C}=ev_{\boldsymbol{\alpha},\boldsymbol{1}}(\mathcal{S})$ is an $[8,4,5]$ MDS code. 
%	Unfortunately, for $\ell=2$, we can't find the parameters $\boldsymbol{v}$ and $(\eta_{1},\eta_{2})\in {\mathbb{F}_{11}^{2}}\setminus\{0,0\}$ to make  $\mathcal{C}=ev_{\boldsymbol{\alpha},\boldsymbol{v}}(\mathcal{S})$ the $[8,4,5]$ self-dual code.
	
\end{example}

\begin{example}\label{exacont:3.7}
	Let $q=13,\ell=3$ and $\boldsymbol{\alpha}=\left(0,1,2,3,4,5,6,9,10,12\right)$. By Corollary~\ref{corcont:3.4}, $\mathcal{C}=ev_{\boldsymbol{\alpha},\boldsymbol{1}}(\mathcal{S})$ is a $[10,k,11-k]_q$ MDS code if and only if for each $k$-subset $I\subseteq [n],\prod\limits_{i\in I}(x-\alpha_{i})=\sum\limits_{i=0}^{k}c_{i}x^{k-i}$,  let
	$g_{k-j}^{(t)}=-\sum\limits_{i=0}^{t}e_{t-i}c_{j+i}$, where   $e_{0}=1,e_{1}=-c_{1},e_{2}=-c_{2}+c_{1}^2, 0\leq t\leq 2,1\leq j\leq 3$, it holds\begin{equation*}
	\left|
	\begin{array}{ccc}
	1+\eta_{1}g_{k-3}^{(0)}&\eta_{1}g_{k-2}^{(0)}&\eta_{1}g_{k-1}^{(0)}\\
	\eta_{2}g_{k-3}^{(1)}&1+\eta_{2}g_{k-2}^{(1)}&\eta_{2}g_{k-1}^{(1)}\\
	\eta_{3}g_{k-3}^{(2)}&\eta_{3}g_{k-2}^{(2)}&1+\eta_{3}g_{k-1}^{(2)}
	\end{array}
	\right|\neq 0
	\end{equation*}
	By {MATLAB}, we get the following table about parameters $(\eta_{1},\eta_{2},\eta_{3})$ and MDS codes with dimension $k\,(5\leq k\leq 9)$. 
	%	\begin{equation*}
	
	\begin{table}[H]
		\centering%\caption{
			%Relationship about the number of parameter $(\eta_{1},\eta_{2},\eta_{3})$
			%and MDS code with dimension $k(5\leq k\leq 9)$ for $\ell=3$.}	
		\begin{tabular}{|c|c|c|}
			\hline
			\multirow{2}[0]{*}{dimension $k$} & \multirow{2}[0]{*}{the number of parameter $(\eta_{1},\eta_{2},\eta_{3})$} & {\multirow{2}[0]{*}{MDS code}} \\
			& & \\
			\hline
			\multirow{2}[0]{*}{$k=5$}& \multirow{2}[0]{*}{$197$} & \multirow{2}[0]{*}{$[10,5,6]$}\\
			&&\\
			\hline
			\multirow{2}[0]{*}{$k=6$}&\multirow{2}[0]{*}{$234$}  & \multirow{2}[0]{*}{$[10,6,5]$}\\
			&&\\
			\hline
			\multirow{2}[0]{*}{$k=7$}& \multirow{2}[0]{*}{$500$} & \multirow{2}[0]{*}{$[10,7,4]$}\\
			&&\\
			\hline
			\multirow{2}[0]{*}{$k=8$}&\multirow{2}[0]{*}{$1216$}  & \multirow{2}[0]{*}{$[10,8,3]$}\\
			&&\\
			\hline
			\multirow{2}[0]{*}{$k=9$}&\multirow{2}[0]{*}{$1619$}  & \multirow{2}[0]{*}{$[10,9,2]$}\\
			&&\\
			\hline
		\end{tabular}%
	%	\label{tab:addlabel}%
	\end{table}
	For example, for $k=5$ and $(\eta_{1}, \eta_{2}, \eta_{3})=(2,3,6)$, let 
	\begin{equation*}
	\mathcal{S}=\left\{\sum\limits_{i=0}^{4}f_{i}x^i+2f_{2}x^5+3f_{3}x^6+6f_{4}x^7:\mbox{for all}\  f_{i}\in\mathbb{F}_{13},0\leq i\leq 4\right\}.
	\end{equation*}
	The TGRS $\mathcal{C}=ev_{\boldsymbol{\alpha},\boldsymbol{1}}(\mathcal{S})$ is an MDS code $[10,5,6]$.
	We will see later in Section~5 that there exists $\boldsymbol{v}\in\mathbb{F}_{13^2}^{10}\setminus\{\boldsymbol{0}\}$ such that $\mathcal{C}=ev_{\boldsymbol{\alpha},\boldsymbol{v}}(\mathcal{S})$ is a  self-dual MDS code $[10,5,6]$ over the extension field $\mathbb{F}_{13^2}$, where 
	 \begin{equation*}
	 \mathcal{S}=\left\{\sum\limits_{i=0}^{4}f_{i}x^i+2f_{2}x^5+3f_{3}x^6+6f_{4}x^7:\mbox{for all}\  f_{i}\in\mathbb{F}_{13^2},0\leq i\leq 4\right\}.
	 \end{equation*}

\end{example}
Finally, we investigate  the sufficient and necessary condition about that the Singleton defect $S(\mathcal{C})=1$ or $\ell$.

\begin{lemma}[\cite{9828478}]\label{lemcont:3.5} 
	An $[n, k]$ linear code $\mathcal{C}$ over $\mathbb{F}_{q}$ satisfies $S(\mathcal{C})=1$ if and only if the generator matrix $G$ of $\mathcal{C}$ satisfies the following conditions:
	
 $(1)$\  There exists $k$ linearly dependent columns in $G$, i.e., $S(\mathcal{C})\neq 0$.
	 
 $(2)$\  Any $k+1$ columns of $G$ be of $\rank$ $k$, i.e., $S(\mathcal{C})\leq 1$.
\end{lemma}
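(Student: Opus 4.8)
The plan is to route everything through a single \emph{rank dictionary} for the generator matrix. For a set of coordinates $T\subseteq[n]$ write $G_T$ for the $k\times|T|$ submatrix of $G$ formed by the columns indexed by $T$, and set $r=\max\{\,|T|:\rank(G_T)<k\,\}$. I would first prove the master formula $d(\mathcal{C})=n-r$, equivalently $S(\mathcal{C})=r-k+1$, and then obtain both items by specializing to $r=k$.

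For the master formula, the key observation is the correspondence between low-weight codewords and rank-deficient column blocks. A codeword $\mathbf{c}=\mathbf{m}G$ vanishes on every coordinate of $T$ precisely when $\mathbf{m}$ lies in the left kernel of $G_T$, and this kernel is nonzero iff $\rank(G_T)<k$. Since $\rank(G)=k$, any nonzero such $\mathbf{m}$ yields a genuinely \emph{nonzero} codeword, and that codeword has weight at most $n-|T|$ because it is zero on all of $T$. Running this over the $T$ realizing $r$ produces a nonzero codeword of weight $\le n-r$, so $d(\mathcal{C})\le n-r$. Conversely, a minimum-weight codeword of weight $d$ has zero set $T_{0}$ of size $n-d$, and its message is a nonzero left-kernel vector of $G_{T_{0}}$, forcing $\rank(G_{T_{0}})<k$ and hence $r\ge n-d$; thus $d(\mathcal{C})\ge n-r$. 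Combining the two bounds gives $d(\mathcal{C})=n-r$, and therefore $S(\mathcal{C})=n-k+1-d(\mathcal{C})=r-k+1$.

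It then remains to match $r=k$ with the two stated conditions, using that deleting columns cannot raise the rank, so $\rank(G_{T'})\le\rank(G_T)$ whenever $T'\subseteq T$. Condition $(1)$, the existence of $k$ linearly dependent columns, says some $k$-subset $T$ has $\rank(G_T)<k$, i.e.\ $r\ge k$; by monotonicity this is exactly $S(\mathcal{C})\ne 0$. Condition $(2)$, that every $k+1$ columns have rank $k$, says no $(k+1)$-subset is rank-deficient, i.e.\ $r\le k$ (monotonicity again propagating this to all larger subsets), which is exactly $S(\mathcal{C})\le 1$. Since $S(\mathcal{C})$ is a nonnegative integer, $S(\mathcal{C})=1$ holds iff $S(\mathcal{C})\ne 0$ and $S(\mathcal{C})\le 1$, i.e.\ iff $r=k$, i.e.\ iff $(1)$ and $(2)$ both hold. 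The only point demanding care is the weight bookkeeping inside the master formula—guaranteeing that a kernel vector produces a truly nonzero codeword and that $T_{0}$ is the \emph{full} zero set of a minimum-weight word—but this is routine once $\rank(G)=k$ is invoked, so I expect no serious obstacle here.
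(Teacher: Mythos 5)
Your argument is correct and complete. One thing to be aware of: this lemma is not proved in the paper at all --- it is imported verbatim from \cite{9828478} and used as a black box --- so there is no internal proof to compare yours against; what you have written is a valid self-contained derivation of the cited result. Your master formula $d(\mathcal{C}) = n - r$, where $r = \max\{|T| : \rank(G_T) < k\}$, is the generator-matrix analogue of the usual parity-check characterization of minimum distance, and it immediately gives $S(\mathcal{C}) = r - k + 1$. From there the equivalences (condition $(1)$) $\Leftrightarrow$ $r \ge k$ $\Leftrightarrow$ $S(\mathcal{C}) \neq 0$ and (condition $(2)$) $\Leftrightarrow$ $r \le k$ $\Leftrightarrow$ $S(\mathcal{C}) \le 1$ follow exactly as you say, with the monotonicity $\rank(G_{T'}) \le \rank(G_T)$ for $T' \subseteq T$ doing the work of transferring rank-deficiency between subsets of different sizes, and the conclusion $S(\mathcal{C})=1 \Leftrightarrow (1)\wedge(2)$ is then just integrality of the Singleton defect. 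The two points you flag as needing care are indeed the only delicate ones: injectivity of $\mathbf{m} \mapsto \mathbf{m}G$ (which follows from $\rank(G) = k$, implicit in $G$ being a generator matrix) guarantees that a left-kernel vector of $G_T$ produces a genuinely nonzero codeword, and the zero set of a minimum-weight codeword having size exactly $n - d(\mathcal{C})$ gives the reverse inequality $r \ge n - d(\mathcal{C})$. Both are handled correctly, so there is no gap.
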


From above result, we can obtain the sufficient and necessary condition of AMDS.

\begin{corollary}
	With the notation as in Theorem~\ref{theconst:3.2}, let $(\eta_{1},\cdots,\eta_{\ell})\in \mathbb{F}_{q}^{\ell}\setminus\Omega$. Then $\mathcal{C}$ is AMDS if and only if for each $(k+1)$-subset $J\subseteq \{1,2,\cdots,n\}$, there is a k-susbet $I\subseteq J$, such that let $\prod\limits_{i\in I}(x-\alpha_{i})=\sum\limits_{i=0}^{k}c_{i}x^{k-i}$ and $g_{k-j}^{(t)}=-\sum\limits_{i=0}^{\min\{t,k-j\}}e_{t-i}c_{j+i}$, where $e_{0},\cdots,e_{t}$ are in Lemma~\ref{lemconst:3.1}, $0\leq t<\ell,1\leq j\leq \ell$, it holds
	\begin{equation*}
	\left|\begin{array}{ccccc}
	1+\eta_{1}g_{k-\ell}^{(0)}&\eta_{1}g_{k-\ell+1}^{(0)}&\cdots&\eta_{1}g_{k-2}^{(0)}&\eta_{1}g_{k-1}^{(0)}\\
	\eta_{2}g_{k-\ell}^{(1)}&1+\eta_{2}g_{k-\ell+1}^{(1)}&\cdots&\eta_{2}g_{k-2}^{(1)}&\eta_{2}g_{k-1}^{(1)}\\
	\vdots&\vdots&\vdots&\vdots&\vdots\\
	\eta_{\ell}g_{k-\ell}^{(\ell-1)}&\eta_{\ell}g_{k-\ell+1}^{(\ell-1)}&\cdots&\eta_{\ell}g_{k-2}^{(\ell-1)}&1+\eta_{\ell}g_{k-1}^{(\ell-1)}\\
	\end{array}\right| \neq 0.    	
	\end{equation*}
\end{corollary}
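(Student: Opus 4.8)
The plan is to invoke Lemma~\ref{lemcont:3.5}, which characterizes $S(\mathcal{C})=1$ by the two conditions (1) and (2), and to verify each of them using the minor factorization already established in the proof of Theorem~\ref{theconst:3.2}. Recall that for an arbitrary $k$-subset $I\subseteq[n]$ that proof shows
\[
|G_{I}|=D_{I}\cdot\prod_{1\leq j<i\leq k}(\alpha_{i}-\alpha_{j}),
\]
where $D_{I}$ denotes the $\ell\times\ell$ determinant appearing in the statement, built from the coefficients $c_{j}$ of $\prod_{i\in I}(x-\alpha_{i})=\sum_{i=0}^{k}c_{i}x^{k-i}$. Because the $\alpha_{i}$ are distinct, the Vandermonde factor never vanishes, so $|G_{I}|\neq0$ if and only if $D_{I}\neq0$. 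This equivalence is the engine of the whole argument.

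First I would dispose of condition~(1). By hypothesis $(\eta_{1},\cdots,\eta_{\ell})\notin\Omega$, so by Theorem~\ref{theconst:3.2} the code $\mathcal{C}$ is not MDS; equivalently, at least one $k\times k$ minor of $G$ vanishes, i.e.\ $G$ possesses $k$ linearly dependent columns. Hence condition~(1) of Lemma~\ref{lemcont:3.5} is satisfied automatically. Consequently Lemma~\ref{lemcont:3.5} reduces the claim to a single equivalence: $\mathcal{C}$ is AMDS (that is, $S(\mathcal{C})=1$) if and only if condition~(2) holds, namely that every $k+1$ columns of $G$ have rank $k$.

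It then remains to rephrase condition~(2) in the stated determinantal form. Since $G$ has exactly $k$ rows, any $k+1$ columns span a space of dimension at most $k$, and they attain rank $k$ precisely when the $k\times(k+1)$ submatrix they form contains a nonsingular $k\times k$ submatrix. For the associated $(k+1)$-subset $J$, this says there exists a $k$-subset $I\subseteq J$ with $|G_{I}|\neq0$, which by the factorization above is exactly the requirement $D_{I}\neq0$ for some $I\subseteq J$. Quantifying over all $(k+1)$-subsets $J\subseteq\{1,2,\cdots,n\}$ yields verbatim the condition in the corollary, completing the proof.

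The argument is largely bookkeeping, and I expect no genuine obstacle. The one point that warrants care is the linear-algebra translation of ``rank $k$'' for a $k\times(k+1)$ matrix into the existence of a single nonvanishing $k\times k$ minor; coupled with this is the observation—already implicit in the appeal to the generality of $I$ at the end of the proof of Theorem~\ref{theconst:3.2}—that the factorization $|G_{I}|=D_{I}\cdot\prod_{1\leq j<i\leq k}(\alpha_{i}-\alpha_{j})$ is valid for every $k$-subset $I$, not merely for $I=\{1,2,\cdots,k\}$, so that the equivalence $|G_{I}|\neq0\Leftrightarrow D_{I}\neq0$ may be applied uniformly across all $J$.
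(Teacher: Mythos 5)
Your proposal is correct and follows essentially the same route as the paper's own proof: both invoke Lemma~\ref{lemcont:3.5}, use Theorem~\ref{theconst:3.2} with $(\eta_{1},\cdots,\eta_{\ell})\notin\Omega$ to secure condition~(1), and translate condition~(2) into the determinantal statement via the factorization $|G_{I}|=D_{I}\cdot\prod_{1\leq j<i\leq k}(\alpha_{i}-\alpha_{j})$ established in the proof of that theorem. Your write-up is in fact slightly more explicit than the paper's, since it spells out the nonvanishing of the Vandermonde factor and the rank-$k$ translation for the $k\times(k+1)$ submatrices, which the paper leaves implicit.
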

	
\begin{proof}
	Since $(\eta_{1},\cdots,\eta_{\ell})\notin\Omega,$ by Theorem~\ref{theconst:3.2} we have $S(\mathcal{C})\geq 1$. $\mathcal{C}$ is AMDS if and only if $S(\mathcal{C})\leq 1$, if and only if any $k+1$ columns of G be of $\rank$  $k$ by Lemma~\ref{lemcont:3.5}. That is, for any $(k+1)$-subset $J\subseteq\{1,2,\cdots,n\}$, there is a $k$-subset $I\subseteq J$, such that let $\prod\limits_{i\in I}(x-\alpha_{i})=\sum\limits_{i=0}^{k}c_{i}x^{k-i}$ and $g_{k-j}^{(t)}=-\sum\limits_{i=0}^{\min\{t,k-j\}}e_{t-i}c_{j+i}$, where $e_{0},\cdots,e_{t}$ are in Lemma ~\ref{lemconst:3.1}, $0\leq t<\ell,1\leq j\leq \ell$, it holds 
	\begin{equation*}
	\left|\begin{array}{ccccc}
	1+\eta_{1}g_{k-\ell}^{(0)}&\eta_{1}g_{k-\ell+1}^{(0)}&\cdots&\eta_{1}g_{k-2}^{(0)}&\eta_{1}g_{k-1}^{(0)}\\
	\eta_{2}g_{k-\ell}^{(1)}&1+\eta_{2}g_{k-\ell+1}^{(1)}&\cdots&\eta_{2}g_{k-2}^{(1)}&\eta_{2}g_{k-1}^{(1)}\\
	\vdots&\vdots&\vdots&\vdots&\vdots\\
	\eta_{\ell}g_{k-\ell}^{(\ell-1)}&\eta_{\ell}g_{k-\ell+1}^{(\ell-1)}&\cdots&\eta_{\ell}g_{k-2}^{(\ell-1)}&1+\eta_{\ell}g_{k-1}^{(\ell-1)}\\
	\end{array}\right| \neq 0.    	
	\end{equation*}
\end{proof}

Finally, we determine a condition for $S(\mathcal{C})=\ell$.
\begin{theorem}\label{thmcont:3.7}
	The Singleton defect of $\mathcal{C}=ev_{\boldsymbol{\alpha},\boldsymbol{1}}(\mathcal{S})$ satisfies $S(\mathcal{C})=\ell$ if and only if there exists $k+\ell-1$-subset $I\subseteq \{1,2,\cdots,n\}$ such that $$c_{k+\ell-i}=\eta_{\ell-i+1}c_{k-i},\quad i=1,2,\cdots,\ell,$$ where  $c_0,c_1,\cdots, c_{k+\ell-1}$ satisfy $\prod\limits_{i\in I}(x-\alpha_{i})=\sum\limits_{i=0}^{k+\ell-1}c_{i}x^{k+\ell-1-i}$.
\end{theorem}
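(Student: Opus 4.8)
The plan is first to convert the statement about the Singleton defect into a statement about the existence of a single extremal codeword, and then to decode that existence into the displayed coefficient relations. As noted just before Lemma~\ref{lemconst:3.1}, the code $\mathcal{C}$ is a subcode of $GRS_{k+\ell}(\boldsymbol{\alpha},\boldsymbol{1})$, so $d(\mathcal{C})\geq n-k-\ell+1$, while the Singleton bound gives $d(\mathcal{C})\leq n-k+1$. Hence $0\le S(\mathcal{C})\le \ell$, and the extreme value $S(\mathcal{C})=\ell$ is equivalent to $d(\mathcal{C})=n-k-\ell+1$. The first step is therefore the remark that $S(\mathcal{C})=\ell$ holds if and only if $\mathcal{C}$ possesses a nonzero codeword of weight exactly $n-k-\ell+1$, i.e.\ a nonzero $f\in\mathcal{S}$ whose evaluation $ev_{\boldsymbol{\alpha},\boldsymbol{1}}(f)$ vanishes at $k+\ell-1$ of the points $\alpha_1,\dots,\alpha_n$.

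Next I would pin down the shape of such an $f$. Every element of $\mathcal{S}$ has degree at most $k+\ell-1$, so a nonzero $f\in\mathcal{S}$ vanishing at the $k+\ell-1$ distinct points $\{\alpha_i:i\in I\}$ has exactly these as its roots and must equal $c\prod_{i\in I}(x-\alpha_i)$ for some $c\in\mathbb{F}_q^{*}$ and some $(k+\ell-1)$-subset $I\subseteq\{1,\dots,n\}$. Since $\mathcal{S}$ is an $\mathbb{F}_q$-linear space, $f\in\mathcal{S}$ is equivalent to $\prod_{i\in I}(x-\alpha_i)\in\mathcal{S}$. Thus the whole question collapses to: for which $(k+\ell-1)$-subsets $I$ does the monic polynomial $\prod_{i\in I}(x-\alpha_i)$ lie in $\mathcal{S}$?

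The last step reads this membership off from the description \eqref{equ:2.1} of $\mathcal{S}$, equivalently from the rows of \eqref{equ:3.1}. A basis of $\mathcal{S}$ is $\{1,x,\dots,x^{k-\ell-1}\}\cup\{x^{k-\ell+j}+\eta_{j+1}x^{k+j}:0\le j\le\ell-1\}$, so a polynomial $P=\sum_p b_px^p$ of degree at most $k+\ell-1$ lies in $\mathcal{S}$ if and only if its top $\ell$ coefficients are slaved to the coefficients $\ell$ positions below by $b_{k+j}=\eta_{j+1}b_{k-\ell+j}$ for $0\le j\le\ell-1$, while the coefficients $b_0,\dots,b_{k-1}$ remain free and impose nothing. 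Applying this to $P=\prod_{i\in I}(x-\alpha_i)$ and rewriting each $b_p$ through the coefficients $c_i$ defined by $\prod_{i\in I}(x-\alpha_i)=\sum_{i=0}^{k+\ell-1}c_ix^{k+\ell-1-i}$ turns the $\ell$ membership identities into exactly the $\ell$ linear relations recorded in the statement. Combining the three steps yields $S(\mathcal{C})=\ell$ precisely when some $(k+\ell-1)$-subset $I$ satisfies that system, which is the claim.

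The part I expect to demand the most care is this final reading-off. One must verify that the only constraints cutting out $\mathcal{S}$ are the $\ell$ relations coupling each twisted monomial $x^{k+j}$ to its hook $x^{k-\ell+j}$, i.e.\ that the lower coefficients are genuinely unconstrained; this is exactly the assertion that the spanning set above is a basis, after which membership can be tested pair-by-pair on the hook/twist positions. The remaining delicacy is purely notational: in passing from the coefficient identities $b_{k+j}=\eta_{j+1}b_{k-\ell+j}$ to the form $c_{k+\ell-i}=\eta_{\ell-i+1}c_{k-i}$, one must keep the indexing of the $c_i$ aligned with the convention $\prod_{i\in I}(x-\alpha_i)=\sum_i c_ix^{k+\ell-1-i}$ fixed in the statement, since it is only this bookkeeping that distinguishes the hook coefficients from their reverses.
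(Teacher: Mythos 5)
Your proposal is correct in substance and follows the paper's proof essentially step for step: the paper likewise observes that $\deg f\le k+\ell-1$ for every $f\in\mathcal{S}\setminus\{0\}$, reduces $S(\mathcal{C})=\ell$ to the existence of a $(k+\ell-1)$-subset $I$ with $\prod_{i\in I}(x-\alpha_i)\in\mathcal{S}$, and then reads this membership off on coefficients.

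However, the step you deferred as ``purely notational'' is exactly where both you and the paper are loose, and carrying it out does \emph{not} produce the displayed relations. With $c_i$ the coefficient of $x^{k+\ell-1-i}$ (the convention fixed in the statement), your membership identities $b_{k+j}=\eta_{j+1}b_{k-\ell+j}$, $0\le j\le\ell-1$, translate via $b_p=c_{k+\ell-1-p}$ into
\begin{equation*}
c_{\ell-i}=\eta_i\,c_{2\ell-i},\qquad i=1,\dots,\ell,
\end{equation*}
i.e.\ constraints on the \emph{top} $2\ell$ coefficients of $\prod_{i\in I}(x-\alpha_i)$, whereas the displayed system $c_{k+\ell-i}=\eta_{\ell-i+1}c_{k-i}$ couples the \emph{bottom} $2\ell$ coefficients; the displayed form is what one gets if $c_i$ instead denotes the coefficient of $x^i$. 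A sanity check with $\ell=1$, $k=3$: the polynomial $x^3+ax^2+bx+c$ lies in $\mathcal{S}$ iff $1=\eta_1a$, which is $c_0=\eta_1c_1$ in the stated convention, while the displayed relation $c_3=\eta_1c_2$ asks $c=\eta_1b$. So the theorem's stated convention and its displayed relations are mutually inconsistent --- a typo which the paper's own proof glosses over with ``That is,'' exactly as you do. Your argument is the right one, but the assertion that the identities become ``exactly the relations recorded in the statement'' is false as literally written; what your argument proves is the theorem with the relations corrected to $c_{\ell-i}=\eta_ic_{2\ell-i}$, or equivalently with $c_i$ read as the coefficient of $x^i$.
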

\begin{proof}
	We know that $$d(\mathcal{C})=\mathop{\min}\limits_{f\in \mathcal{S}\setminus\{0\}}\#\left\{i\in [n]:f(\alpha_{i})\neq 0\right\}= n-\mathop{\max}\limits_{f\in \mathcal{S}\setminus\{0\}}\#\left\{
	i\in [n]:f(\alpha_{i})=0\right\}.$$
	On one hand, $S(\mathcal{C})=\ell$ if and only if $n-d=k+\ell-1$, if and only if 
	$$\mathop{\max}\limits_{f\in \mathcal{S}\setminus\{0\}}\#\left\{i\in [n]:f(\alpha_{i})=0\right\}=k+\ell-1.$$ On the other hand, any polynomial $ f\in\mathcal{S}\setminus\{0\}$ has degree $\deg(f)\leq k+\ell-1$. Thus, $S(\mathcal{C})=\ell$ if and only if there exists $k+\ell-1$-subset $I\subseteq \{1,2,\cdots,n\}$ such that $\prod\limits_{i\in I}(x-\alpha_{i})\in\mathcal{S}$. That is, there exists $k+\ell-1$-subset $I\subseteq \{1,2,\cdots,n\}$ such that $$c_{k+\ell-i}=\eta_{\ell-i+1}c_{k-i},\quad i=1,2,\cdots,\ell,$$ where  $c_0,c_1,\cdots, c_{k+\ell-1}$ satisfy $\prod\limits_{i\in I}(x-\alpha_{i})=\sum\limits_{i=0}^{k+\ell-1}c_{i}x^{k+\ell-1-i}$.
\end{proof}

\section{ The dual codes of TGRS codes $\mathcal{C}$}
For any two vectors $\boldsymbol{x}=(x_{1},\cdots,x_{n}),\boldsymbol{y}=(y_{1},\cdots,y_{n})\in\mathbb{F}_{q}^n$, the inner product of $\boldsymbol{x}$ and $\boldsymbol{y}$ is defined as $\boldsymbol{x}\cdot\boldsymbol{y}=\sum\limits_{i=1}^{n}x_{i}y_{i}$. The dual code of a linear code $\mathcal{C}$ is defined to be
$$
\mathcal{C}^{\perp}=\left\{\boldsymbol{x} \in \mathbb{F}_{q}^{n}: \boldsymbol{x} \cdot \boldsymbol{c}=\sum_{i=1}^{n} c_{i} x_{i}=0 \mbox { for all } \boldsymbol{c} \in \mathcal{C}\right\}.
$$
In this section, we will devote to determining the dual code or a parity-check matrix of the TGRS codes $\mathcal{C}=ev_{\boldsymbol{\alpha},\boldsymbol{1}}(\mathcal{S})$.% and determining a sufficient and necessary condition that a TGRS code with $\ell$ twists is  $\ell$-MDS. 

Firstly, we recall a useful result from~\cite{9828478}.
\begin{lemma}[\cite{9828478}]\label{lemcont:4.1}
	Let $\alpha_{1},\cdots,\alpha_{n}$ be distinct elements of $\mathbb{F}_{q}$ and $\prod\limits_{i=1}^{n}(x-\alpha_{i})=\sum\limits_{j=0}^{n}\sigma_{j}x^{n-j}$. Let $\Lambda_{0}=1$ and $\boldsymbol{y}=(\Lambda_{0},\Lambda_{1},\cdots,\Lambda_{n})$ be the unique solution of the following system of equations：
	\begin{equation*}
	\begin{pmatrix}
	\sigma_{0}&0&0&\cdots&0\\
	\sigma_{1}&\sigma_{0}&0&\cdots&0\\
	\sigma_{2}&\sigma_{1}&\sigma_{0}&\cdots&0\\
	\vdots&\vdots&\vdots&\ddots&\vdots\\
	\sigma_{n}&\sigma_{n-1}&\sigma_{n-2}&\cdots&\sigma_{0}
	\end{pmatrix}
	\begin{pmatrix}
	\Lambda_{0}\\\Lambda_{1}\\ \vdots\\ \Lambda_{n}
	\end{pmatrix}=
	\begin{pmatrix}
	1\\0\\ \vdots\\ 0
	\end{pmatrix}.
	\end{equation*}
	For any fixed $0\leq t\leq n$, if $\alpha_{i}^{n-1+t}=\sum\limits_{j=0}^{n-1}f_{j}\alpha_{i}^j$ for $1\leq i\leq n$, then $f_{n-1}=\Lambda_{t}$.
\end{lemma}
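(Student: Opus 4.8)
The plan is to give both quantities $\Lambda_t$ and $f_{n-1}$ a common generating-function meaning and then match coefficients. First I would reinterpret the defining linear system. Writing $g^{*}(x)=\sum_{j=0}^{n}\sigma_{j}x^{j}$ for the reciprocal polynomial of $g(x)=\prod_{i=1}^{n}(x-\alpha_i)=\sum_{j=0}^{n}\sigma_{j}x^{n-j}$, the lower-triangular Toeplitz system in the statement is exactly the truncation to degrees $0,1,\dots,n$ of the convolution identity $g^{*}(x)\cdot\big(\sum_{k\ge 0}\Lambda_{k}x^{k}\big)=1$. Since $\sigma_{0}=g^{*}(0)=1$, the formal power series $1/g^{*}(x)$ exists and is unique, so $\Lambda_{t}=[x^{t}]\,\dfrac{1}{g^{*}(x)}$ for $0\le t\le n$, where $[x^{t}]$ denotes extraction of the coefficient of $x^{t}$.

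Next I would express $f_{n-1}$ by Lagrange interpolation. The polynomial $r(x):=x^{\,n-1+t}\bmod g(x)$ is the unique polynomial of degree $<n$ agreeing with $x^{\,n-1+t}$ at every root $\alpha_{i}$, so $r(x)=\sum_{i=1}^{n}\alpha_{i}^{\,n-1+t}\,\ell_{i}(x)$ with $\ell_{i}(x)=\prod_{j\ne i}\frac{x-\alpha_{j}}{\alpha_{i}-\alpha_{j}}$. The coefficient of $x^{n-1}$ in $\ell_{i}$ is $\prod_{j\ne i}(\alpha_{i}-\alpha_{j})^{-1}=1/g'(\alpha_{i})$, whence $f_{n-1}=\sum_{i=1}^{n}\alpha_{i}^{\,n-1+t}/g'(\alpha_{i})$.

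The bridge between the two expressions is the partial-fraction identity $\frac{1}{g(y)}=\sum_{i=1}^{n}\frac{1}{g'(\alpha_{i})(y-\alpha_{i})}$, valid because the $\alpha_{i}$ are distinct. Substituting $y=1/x$ and using $g(1/x)=x^{-n}g^{*}(x)$ turns this into $\sum_{i=1}^{n}\frac{1}{g'(\alpha_{i})\,(1-\alpha_{i}x)}=\dfrac{x^{\,n-1}}{g^{*}(x)}$ as an identity of formal power series. Expanding each term on the left geometrically gives $\sum_{m\ge 0}\big(\sum_{i}\alpha_{i}^{m}/g'(\alpha_{i})\big)x^{m}$, while the right-hand side equals $\sum_{k\ge 0}\Lambda_{k}x^{\,n-1+k}$; comparing the coefficients of $x^{\,n-1+t}$ yields $\sum_{i}\alpha_{i}^{\,n-1+t}/g'(\alpha_{i})=\Lambda_{t}$. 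Combined with the previous step this gives $f_{n-1}=\Lambda_{t}$.

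I expect the delicate points to be bookkeeping rather than conceptual: verifying that all manipulations are legitimate identities of formal power series over $\mathbb{F}_{q}$ (which holds because $g^{*}(0)=1$ is invertible), confirming the index shift in $x^{\,n-1}/g^{*}(x)$ so that the coefficient of $x^{\,n-1+t}$ on the right is precisely $\Lambda_{t}$, and using the range $0\le t\le n$ to keep $m=n-1+t\ge n-1$ so the geometric-series extraction is valid. The main obstacle, if there is one, is recognizing the reciprocal-polynomial/partial-fraction reformulation; once it is in place, the matching of coefficients is routine.
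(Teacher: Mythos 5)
The paper itself gives no proof of this lemma: it is imported verbatim from the reference \cite{9828478} (Sui et al.), so there is no internal argument to compare yours against. Judged on its own, your proof is correct and self-contained. The three pillars all check out: (i) the lower-triangular Toeplitz system is precisely the truncation to degrees $0,\dots,n$ of $g^{*}(x)\cdot\sum_{k\ge 0}\Lambda_k x^k=1$, and since $\sigma_0=1$ the inverse series exists, so $\Lambda_t=[x^t]\,1/g^{*}(x)$; (ii) Lagrange interpolation at the distinct points $\alpha_i$ gives $f_{n-1}=\sum_{i=1}^{n}\alpha_i^{\,n-1+t}/g'(\alpha_i)$, using $g'(\alpha_i)=\prod_{j\ne i}(\alpha_i-\alpha_j)$; (iii) the partial-fraction identity $1/g(y)=\sum_i 1/\bigl(g'(\alpha_i)(y-\alpha_i)\bigr)$ under $y=1/x$, together with $g^{*}(x)=x^n g(1/x)=\prod_i(1-\alpha_i x)$, yields $x^{n-1}/g^{*}(x)=\sum_i 1/\bigl(g'(\alpha_i)(1-\alpha_i x)\bigr)$ as formal power series, and matching the coefficient of $x^{n-1+t}$ closes the argument. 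One cosmetic remark: the restriction $0\le t\le n$ is not needed for the geometric-series extraction (which is valid for every exponent $m\ge 0$); its only role is that the finite system in the statement pins down $\Lambda_t$ just for $t\le n$. Your argument also explains, as the special case $t=0$ versus exponents below $n-1$, the identities $\sum_i u_i\alpha_i^m=0$ for $m\le n-2$ and $=1$ for $m=n-1$ that the paper invokes from \cite{huang2021mds} in the proof of Theorem 4.2, since $u_i=1/g'(\alpha_i)$; in that sense your route unifies the lemma with the other evaluation-sum facts the paper uses.
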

\begin{theorem}\label{thecont:4.2}
	Let $\alpha_{1},\alpha_{2},\cdots,\alpha_{n}$ be distinct elements of $\mathbb{F}_{q}$, $\prod\limits_{i=1}^{n}(x-\alpha_{i})=\sum\limits_{j=0}^{n}\sigma_{j}x^{n-j}$ and $u_{i}=\prod\limits_{j=1,j\neq i}^{n}(\alpha_{i}-\alpha_{j})^{-1}$ for $1\leq i\leq n$. If $\prod\limits_{j=1}^{\ell}\eta_{j}\neq 0$, then $\mathcal{C}=ev_{\boldsymbol{\alpha},\boldsymbol{1}}(\mathcal{S})$ has parity check matrix \begin{equation}\label{equ:4.1}
	H=\begin{pmatrix}
	\cdots&u_{j}&\cdots\\
	\cdots&u_{j}\alpha_{j}&\cdots\\
	\vdots&\vdots&\vdots\\
	\cdots&u_{j}\alpha_{j}^{n-k-\ell-1}&\cdots\\
	\cdots&u_{j}\alpha_{j}^{n-k-\ell}\left(1-\eta_{\ell}\sum\limits_{i=0}^{\ell}\sigma_{\ell-i}\alpha_{j}^i\right)&\cdots\\
	\cdots&u_{j}\alpha_{j}^{n-k-\ell}\left(\sum\limits_{i=0}^{1}\sigma_{1-i}\alpha_{j}^i-\eta_{\ell-1}\sum\limits_{i=0}^{\ell+1}\sigma_{\ell+1-i}\alpha_{j}^i\right)&\cdots\\
	\vdots&\vdots&\vdots\\
	\cdots&u_{j}\alpha_{j}^{n-k-\ell}\left(\sum\limits_{i=0}^{\ell-1}\sigma_{\ell-1-i}\alpha_{j}^i-\eta_{1}\sum\limits_{i=0}^{2\ell-1}\sigma_{2\ell-1-i}\alpha_{j}^i\right)&\cdots\\
	\end{pmatrix}_{(n-k)\times n}.
	\end{equation}
\end{theorem}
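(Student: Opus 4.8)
The plan is to verify that $H$ is an $(n-k)\times n$ matrix of rank $n-k$ all of whose rows lie in $\mathcal{C}^{\perp}$; since $\dim\mathcal{C}=k$ (the $k$ rows of $G$ carry the pairwise distinct degrees $0,\dots,k-\ell-1,k,\dots,k+\ell-1<n$, hence are independent) we have $\dim\mathcal{C}^{\perp}=n-k$, and these two facts force the row space of $H$ to be exactly $\mathcal{C}^{\perp}$. The rank of $H$ I would settle first and cheaply: after stripping the nonzero scalars $u_{j}$, the $j$-th entry of row $r$ is $p_{r}(\alpha_{j})$ for an explicit polynomial $p_{r}$, and these polynomials have the pairwise distinct degrees $0,1,\dots,n-k-\ell-1$ (the untwisted rows) and $n-k,n-k+1,\dots,n-k+\ell-1$ (the twisted rows, whose leading coefficient is $-\eta_{\ell-s}\sigma_{0}=-\eta_{\ell-s}\neq0$ because $\prod_{j}\eta_{j}\neq0$). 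Polynomials of distinct degrees $<n$ are linearly independent and evaluation at the $n$ distinct points $\alpha_{1},\dots,\alpha_{n}$ is injective on such polynomials, so $\rank H=n-k$.

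The substance is the orthogonality $G H^{T}=\boldsymbol 0$. Here I would first record the evaluation identity that turns inner products into coefficient extractions: using $\sum_{j}u_{j}\alpha_{j}^{i}=0$ for $0\le i\le n-2$ (the standard GRS/Lagrange orthogonality) and, for $i\ge n-1$, reducing $\alpha_{j}^{i}$ modulo $\prod_{m}(x-\alpha_{m})$ and reading off the top coefficient via Lemma~\ref{lemcont:4.1}, one obtains for every polynomial $h=\sum_{p}h_{p}x^{p}$
\begin{equation*}
\sum_{j=1}^{n}u_{j}h(\alpha_{j})=C\sum_{t\ge0}h_{n-1+t}\Lambda_{t},\qquad C:=\sum_{j=1}^{n}u_{j}\alpha_{j}^{n-1}.
\end{equation*}
Thus the entry of $G H^{T}$ coming from a basis polynomial $b$ of $\mathcal{S}$ and the polynomial $p_{r}$ attached to a row of $H$ equals $C\sum_{t\ge0}(b\,p_{r})_{n-1+t}\Lambda_{t}$, and it suffices to show this coefficient sum vanishes. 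The only input relating the $\sigma$'s to the $\Lambda$'s is the convolution identity $\sum_{i=0}^{m}\sigma_{m-i}\Lambda_{i}=\delta_{m,0}$ for $0\le m\le n$, which is exactly the linear system defining the $\Lambda_{t}$ in Lemma~\ref{lemcont:4.1}.

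With these tools the verification splits according to the types of the two rows. Whenever at least one of the two rows is untwisted, a direct degree count gives $\deg(b\,p_{r})\le n-2$, so the coefficient sum is empty and the entry is $0$; the degree bounds $k-\ell-1$, $k+\ell-1$ on $b$ and $n-k-\ell-1$, $n-k+\ell-1$ on $p_{r}$ make all of these cases immediate. The genuine computation, and the main obstacle, is the twisted$\times$twisted entry, where $b=x^{k-\ell+a}+\eta_{a+1}x^{k+a}$ meets $p_{n-k-\ell+s}=x^{n-k-\ell}\big(A_{s}(x)-\eta_{\ell-s}B_{s}(x)\big)$ with $A_{s}=\sum_{i=0}^{s}\sigma_{s-i}x^{i}$ and $B_{s}=\sum_{i=0}^{\ell+s}\sigma_{\ell+s-i}x^{i}$. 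Expanding the product into the four pieces $x^{N}A_{s},\,x^{N}B_{s}$ with $N\in\{n-2\ell+a,\ n-\ell+a\}$ and applying the convolution identity, each of the sums $\sum_{t}(x^{N}A_{s})_{n-1+t}\Lambda_{t}$ and $\sum_{t}(x^{N}B_{s})_{n-1+t}\Lambda_{t}$ collapses to a Kronecker delta; after the cancellations the whole entry reduces to $\delta_{a+s,\,\ell-1}\big(\eta_{a+1}-\eta_{\ell-s}\big)$. This is where the construction pays off: the delta forces $s=\ell-1-a$, whence $\ell-s=a+1$, so $\eta_{\ell-s}=\eta_{a+1}$ and the entry vanishes. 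Having established $GH^{T}=\boldsymbol 0$ together with $\rank H=n-k=\dim\mathcal{C}^{\perp}$, I conclude that $H$ is a parity-check matrix of $\mathcal{C}$.
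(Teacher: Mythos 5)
Your proposal is correct and takes essentially the same route as the paper's proof: rank of $H$ via "a polynomial of degree less than $n$ vanishing at the $n$ distinct $\alpha_j$ is zero" (with the nonzero leading coefficients $-\eta_{\ell-s}$ supplied by $\prod_j\eta_j\neq 0$), and $GH^{T}=\boldsymbol{0}$ via the Lagrange identities $\sum_j u_j\alpha_j^i=0$ ($i\le n-2$), Lemma~\ref{lemcont:4.1}, and the convolution identity $\sum_{i=0}^{m}\sigma_{m-i}\Lambda_i=\delta_{m,0}$, with the twisted$\times$twisted entries collapsing to $\left(\eta_{a+1}-\eta_{\ell-s}\right)\delta_{a+s,\ell-1}=0$. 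Your packaging is slightly tidier --- the distinct-degrees argument for the rank in place of the paper's explicit triangular system, and a single Kronecker-delta formula subsuming the paper's three cases $a+s<\ell-1$, $a+s=\ell-1$, $a+s>\ell-1$ --- but the mathematical substance is identical.
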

\begin{proof}
	Firstly, we prove that $\mathrm{rank}(H)=n-k$. Suppose that $(f_{0},\cdots,f_{n-k-1})$ is a solution of the system of equations: $(x_{0},x_{1},\cdots,x_{n-k-1})H=\boldsymbol{0}$. Next, we want to show that $(f_{0},\cdots,f_{n-k-1})=\boldsymbol{0}$. Let $$f(x)=\sum\limits_{i=0}^{n-k-\ell-1}f_{i}x^i+\sum\limits_{j=0}^{\ell-1}f_{n-k-\ell+j}m_{j}(x),$$ where $m_{i}(x)=x^{n-k-\ell}(\sum\limits_{j=0}^i\sigma_{i-j}x^j-\eta_{\ell-i}\cdot\sum\limits_{j=0}^{i+\ell}\sigma_{i+\ell-j}x^j),0\leq i\leq \ell-1$. Then $f(\alpha_{i})=0$ for $1\leq i\leq n$. But the degree of $f(x)$ is $\deg(f(x))\leq n-k+\ell-1<n$. So we have $f(x)=0$. This means that $f_{0}=f_{1}=\cdots=f_{n-k-\ell-1}=0$ and
	\begin{equation*}
	\begin{pmatrix}
	1-\eta_{\ell}\sigma_{\ell}&\sigma_{1}-\eta_{\ell-1}\sigma_{\ell+1}&\cdots&\sigma_{\ell-1}-\sigma_{2\ell-1}\eta_{1}\\
	-\eta_{\ell}\sigma_{\ell-1}&1-\eta_{\ell-1}\sigma_{\ell}&\cdots&\sigma_{\ell-2}-\sigma_{2\ell-2}\eta_{1}\\
	\vdots&\vdots&\vdots&\vdots\\
	-\eta_{\ell}&-\eta_{\ell-1}\sigma_{1}&\cdots&-\sigma_{\ell-1}\eta_{1}\\
	0&-\eta_{\ell-1}&\cdots&-\sigma_{\ell-2}\eta_{1}\\
	\vdots&\vdots&\vdots&\vdots\\
	0&0&\cdots&-\eta_{1}
	\end{pmatrix}\cdot 
	\begin{pmatrix}
	f_{n-k-\ell}\\f_{n-k-\ell+1}\\ \vdots\\ f_{n-k-1}
	\end{pmatrix}=\begin{pmatrix}
	0\\0\\ \vdots\\0
	\end{pmatrix}.
	\end{equation*}
	So $f_{0}=f_{1}=\cdots=f_{n-k-1}=0$. Therefore, the system of $(x_{0},\cdots,x_{n-k-1})H=0$ only has a trivial solution, i.e., $\rank(H)=n-k$.
	
	 Secondly, we prove that $G H^{T}=\boldsymbol{0}$. Let $G^{T}=\left(\boldsymbol{g}_{0}^{T}, \boldsymbol{g}_{1}^{T}, \ldots, \boldsymbol{g}_{k-1}^{T}\right)$ and $H^{T}=$ $\left(\boldsymbol{h}_{0}^{T}, \boldsymbol{h}_{1}^{T}, \ldots, \boldsymbol{h}_{n-k-1}^{T}\right)$, where $\boldsymbol{g}_{i}$ is the $(i+1)$-th row of $G$ and $\boldsymbol{h}_{j}$ is the $(j+1)$-th row of $H$ for all $i=0,1,\cdots,k-1$ and $j=0,1,\cdots,n-k-1$. From the proof of \cite[Theorem 2.4]{huang2021mds}, we know that 
	\begin{equation*}
	\left\{
	\begin{array}{ll}
	\sum\limits_{t=1}^{n}u_{t}\alpha_{t}^i=0,&if\ 0\leq i\leq n-2;\\
	\sum\limits_{t=1}^{n}u_{t}\alpha_{t}^{i}=1,&if\ i=n-1.
	\end{array}\right.
	\end{equation*} 
	By using the above result, it is straightforward to verify that  $\boldsymbol{g}_{i}\boldsymbol{h}_{j}^{T}=\boldsymbol{0},$ for all $ 0\leq i\leq k-\ell-1,0\leq j\leq n-k-1$ and for all $ k-\ell\leq i\leq k-1,0\leq j\leq n-k-\ell-1$.\\
	
	 For $i,j\in\{0,1,\cdots,\ell-1\}$, direct computing shows that  \begin{equation*}
	\begin{aligned}
	&\boldsymbol{g}_{k-\ell+i}\cdot \boldsymbol{h}_{n-k-\ell+j}^{T}\\
	=&\sum\limits_{r=1}^n\left(\alpha_{r}^{k-\ell+i}+\eta_{i+1}\alpha_{r}^{k+i}\right)u_{r}\alpha_{r}^{n-k-\ell}\left(\sum\limits_{w=0}^{j}\sigma_{j-w}\alpha_{r}^w-\eta_{\ell-j}\sum\limits_{w=0}^{\ell+j}\sigma_{\ell+j-w}\alpha_{r}^{w}\right)\\
	=&\sum\limits_{w=0}^{j}\sigma_{j-w}\sum\limits_{r=1}^nu_{r}\alpha_{r}^{n-2\ell+i+w}+\eta_{i+1}\sum\limits_{w=0}^{j}\sigma_{j-w}\sum\limits_{r=1}^{n}u_{r}\alpha_{r}^{n-\ell+i+w}\\
	&-\eta_{\ell-j}\sum\limits_{w=0}^{j+\ell}\sigma_{\ell+j-w}\sum\limits_{r=1}^{n}u_{r}\alpha_{r}^{n-2\ell+i+w}-\eta_{i+1}\eta_{\ell-j}\sum\limits_{w=0}^{\ell+j}\sigma_{\ell+j-w}\sum\limits_{r=1}^{n}u_{r}\alpha_{r}^{n-\ell+i+w}.
	\end{aligned}
	\end{equation*}
	 Next, we prove $\boldsymbol{g}_{k-\ell+i}\cdot\boldsymbol{h}_{n-k-\ell+j}^T=\boldsymbol{0}$ in three cases: $i+j<\ell-1,i+j=\ell-1$ and $i+j>\ell-1$. 
	 
	 If $i+j<\ell-1$, then \begin{equation*}
	\begin{aligned}
	\boldsymbol{g}_{k-\ell+i}\cdot\boldsymbol{h}_{n-k-\ell+j}^T&=-\eta_{i+1}\eta_{\ell-j}\sum\limits_{w=\ell-i-1}^{j+\ell}\sigma_{\ell+j-w}\Lambda_{w-\ell+i+1}\\
	&=-\eta_{i+1}\eta_{\ell-j}\sum\limits_{w=0}^{i+j+1}\sigma_{i+j+1-w}\Lambda_{w}=0
	\end{aligned}
	\end{equation*}
	where the first and last equalities follow from Lemma~\ref{lemcont:4.1}.
	
	 If $i+j=\ell-1$, then \begin{equation*}
	\begin{aligned}
	\boldsymbol{g}_{k-\ell+i}\cdot\boldsymbol{h}_{n-k-\ell+j}^T&=\eta_{i+1}-\eta_{\ell-j}-\eta_{i+1}\eta_{\ell-j}\sum\limits_{w=\ell-i-1}^{j+\ell}\sigma_{\ell+j-w}\Lambda_{w-\ell+i+1}\\
	&=-\eta_{i+1}\eta_{\ell-j}\sum\limits_{w=0}^{i+j+1}\sigma_{i+j+1-w}\Lambda_{w}=0
	\end{aligned}
	\end{equation*}
	where the first and last equalities follow from Lemma~\ref{lemcont:4.1}. 
	
	 If $i+j>\ell-1$, then \begin{equation*}
	\begin{aligned}
	&\boldsymbol{g}_{k-\ell+i}\cdot\boldsymbol{h}_{n-k-\ell+j}^T\\
	=&\eta_{i+1}\sum\limits_{w=\ell-i-1}^{j}\sigma_{j-w}\Lambda_{w-\ell+i+1}-\eta_{\ell-j}\sum\limits_{w=2\ell-i-1}^{j+\ell}\sigma_{\ell+j-w}\Lambda_{w-2\ell+i+1}\\
	&-\eta_{i+1}\eta_{\ell-j}\sum\limits_{w=\ell-i-1}^{j+\ell}\sigma_{\ell+j-w}\Lambda_{w-\ell+i+1}\\
	=&\eta_{i+1}\sum\limits_{w=0}^{i+j+1-\ell}\sigma_{i+j+1-\ell-w}\Lambda_{w}-\eta_{\ell-j}\sum\limits_{w=0}^{i+j+1-\ell}\sigma_{i+j+1-\ell-w}\Lambda_{w}\\
	&-\eta_{i+1}\eta_{\ell-j}\sum\limits_{w=0}^{i+j+1}\sigma_{i+j+1-w}\Lambda_{w}\\
	=&0
	\end{aligned}
	\end{equation*}
	where the first and last equalities follow from Lemma~\ref{lemcont:4.1}.
\end{proof}

Now by applying Theorem~\ref{thmcont:3.7} to the dual codes $\mathcal{C}^{\perp}$, we obtain the folllowing necessary and sufficient condition for $\mathcal{C}$ to be $\ell$-MDS.
\begin{corollary}\label{corcont:4.3}
	Let $G$ in $\left(\ref{equ:3.1}\right)$ and $H$ in $(\ref{equ:4.1})$ be the generator matrix and parity check matrix of $\mathcal{C}$, respectively. Then $\mathcal{C}$ is $\ell$-MDS if and only if the following conditions hold:
	
	 $(1)$\ There exists $k+\ell-1$-subset $I\subseteq \{1,2,\cdots,n\}$ such that $$c_{k+\ell-i}=\eta_{\ell-i+1}c_{k-i},\, i=1,2,\cdots,\ell,$$
	  where $c_0,c_1,\cdots, c_{k+\ell-1}$ satisfy $\prod\limits_{i\in I}(x-\alpha_{i})=\sum\limits_{i=0}^{k+\ell-1}c_{i}x^{k+\ell-1-i}$.
	
	 $(2)$\ There exists $n-k+\ell-1$-subset $J\subseteq \{1,2,\cdots,n\}$ such that the following system of equations has solutions:\begin{equation*}
	\begin{pmatrix}
	1-\eta_{\ell}\sigma_{\ell}&\sigma_{1}-\eta_{\ell-1}\sigma_{\ell+1}&\cdots&\sigma_{\ell-1}-\sigma_{2\ell-1}\eta_{1}\\
	-\eta_{\ell}\sigma_{\ell-1}&1-\eta_{\ell-1}\sigma_{\ell}&\cdots&\sigma_{\ell-2}-\sigma_{2\ell-2}\eta_{1}\\
	\vdots&\vdots&\vdots&\vdots\\
	-\eta_{\ell}&-\eta_{\ell-1}\sigma_{1}&\cdots&-\sigma_{\ell-1}\eta_{1}\\
	0&-\eta_{\ell-1}&\cdots&-\sigma_{\ell-2}\eta_{1}\\
	\vdots&\vdots&\vdots&\vdots\\
	0&0&\cdots&-\eta_{1}
	\end{pmatrix}\cdot 
	\begin{pmatrix}
	x_{0}\\x_{1}\\ \vdots\\x_{\ell-1}
	\end{pmatrix}
	=\begin{pmatrix}
	d_{2\ell-1}\\ \vdots\\d_{\ell}\\d_{\ell-1}\\ \vdots\\  d_{0}
	\end{pmatrix},
	\end{equation*}	
	where $d_0,d_1,\cdots, d_{n-k+\ell-1}$ satisfy $ \prod\limits_{i\in J}(x-\alpha_{i})=\sum\limits_{i=0}^{n-k+\ell-1}d_{i}x^{n-k+\ell-1-i}$.
\end{corollary}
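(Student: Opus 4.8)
The plan is to exploit the definition of an $\ell$-MDS code, $S(\mathcal{C})=S(\mathcal{C}^{\perp})=\ell$, and to handle the two equalities separately. First I would note that both defects are automatically at most $\ell$: for $\mathcal{C}$ this is the inequality $n-k-\ell+1\leq d(\mathcal{C})$ recorded in Section~3, and for $\mathcal{C}^{\perp}$ it will follow from the description of $\mathcal{C}^{\perp}$ below. Hence $\mathcal{C}$ is $\ell$-MDS if and only if $S(\mathcal{C})=\ell$ and $S(\mathcal{C}^{\perp})=\ell$ hold simultaneously. The first equality is exactly Theorem~\ref{thmcont:3.7}, which reproduces condition $(1)$ verbatim, so the whole task reduces to showing that condition $(2)$ is equivalent to $S(\mathcal{C}^{\perp})=\ell$.

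For the second equality I would realize $\mathcal{C}^{\perp}$ as an evaluation code and then rerun the zero-counting argument of Theorem~\ref{thmcont:3.7} on it. By Theorem~\ref{thecont:4.2} the rows of $H$ exhibit $\mathcal{C}^{\perp}=ev_{\boldsymbol{\alpha},\boldsymbol{u}}(\mathcal{S}')$, where $\boldsymbol{u}=(u_1,\ldots,u_n)\in(\mathbb{F}_q^{*})^n$ and $\mathcal{S}'$ is the $(n-k)$-dimensional space spanned by $1,x,\ldots,x^{n-k-\ell-1}$ together with the $\ell$ polynomials $m_i(x)=x^{n-k-\ell}\left(\sum_{j=0}^{i}\sigma_{i-j}x^{j}-\eta_{\ell-i}\sum_{j=0}^{i+\ell}\sigma_{i+\ell-j}x^{j}\right)$, $0\leq i\leq\ell-1$, introduced in the proof of Theorem~\ref{thecont:4.2} (their linear independence being precisely the content of $\rank(H)=n-k$). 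Since every $m_i$ has degree at most $n-k+\ell-1$, each nonzero $f\in\mathcal{S}'$ vanishes at most $n-k+\ell-1$ of the $\alpha_t$; because all $u_t\neq0$ the map $ev_{\boldsymbol{\alpha},\boldsymbol{u}}$ preserves supports, whence $d(\mathcal{C}^{\perp})\geq k-\ell+1$ and $S(\mathcal{C}^{\perp})\leq\ell$. Exactly as in Theorem~\ref{thmcont:3.7}, equality $S(\mathcal{C}^{\perp})=\ell$ holds if and only if some nonzero element of $\mathcal{S}'$ has $n-k+\ell-1$ roots among the $\alpha_t$; such an element necessarily has degree exactly $n-k+\ell-1$ and so is a scalar multiple of $\prod_{i\in J}(x-\alpha_i)$ for some $(n-k+\ell-1)$-subset $J$. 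Thus $S(\mathcal{C}^{\perp})=\ell$ iff $\prod_{i\in J}(x-\alpha_i)\in\mathcal{S}'$ for some such $J$.

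It then remains to translate the membership $\prod_{i\in J}(x-\alpha_i)\in\mathcal{S}'$ into the linear system of condition $(2)$. Writing $p(x)=\prod_{i\in J}(x-\alpha_i)=\sum_{i=0}^{n-k+\ell-1}d_i x^{n-k+\ell-1-i}$ and seeking an expansion $p=\sum_{i=0}^{n-k-\ell-1}a_i x^{i}+\sum_{j=0}^{\ell-1}x_j m_j(x)$, I would observe that the monomials $a_i x^{i}$ affect only the coefficients of degree at most $n-k-\ell-1$, which can always be matched freely. Hence $p\in\mathcal{S}'$ is equivalent to matching the top $2\ell$ coefficients, those of $x^{n-k-\ell},\ldots,x^{n-k+\ell-1}$, using the $m_j$-part alone. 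Reading off the coefficient of $x^{n-k-\ell+r}$ in $m_j$ gives exactly the $(r+1)$-st entry of the $(j+1)$-st column of the matrix in condition $(2)$; indeed that matrix is the one already computed in the proof of Theorem~\ref{thecont:4.2}, whose columns are the truncated coefficient vectors of $m_0,\ldots,m_{\ell-1}$. The corresponding coefficients of $p$ furnish the right-hand side $(d_{2\ell-1},\ldots,d_{\ell},d_{\ell-1},\ldots,d_0)^{T}$, so solvability of this system is equivalent to $p\in\mathcal{S}'$, closing the chain of equivalences.

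The main obstacle I anticipate lies in the bookkeeping of this last step: one must check that the columns of the stated matrix really are the degree-$(n-k-\ell)$-through-$(n-k+\ell-1)$ coefficients of the $m_j$ (in particular that $m_j$ contributes only to positions $0,\ldots,j+\ell$, which produces the triangular block of zeros in the lower part of the matrix), and that the scalars $d_{2\ell-1},\ldots,d_0$ are ordered to match the rows. Beyond identifying $\mathcal{S}'$ and carrying out this transcription of the proof of Theorem~\ref{thmcont:3.7} to the dual code, no genuinely new idea is needed.
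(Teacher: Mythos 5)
Your proposal is correct and is exactly the argument the paper intends: the paper gives no written proof beyond the remark that the corollary follows "by applying Theorem~\ref{thmcont:3.7} to the dual code," and your write-up supplies precisely those details—condition $(1)$ is Theorem~\ref{thmcont:3.7} for $\mathcal{C}$, while condition $(2)$ comes from realizing $\mathcal{C}^{\perp}$ via the rows of $H$ as $ev_{\boldsymbol{\alpha},\boldsymbol{u}}(\mathcal{S}')$ (Theorem~\ref{thecont:4.2}), rerunning the zero-counting argument, and matching the top $2\ell$ coefficients against the columns of the matrix already appearing in the rank computation of Theorem~\ref{thecont:4.2}. No discrepancy with the paper's route; your version is simply the fleshed-out form of it.
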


\section{The self-dual TGRS codes}
In this section, we study self-dual TGRS codes. Recall that an $[n, k]$ linear code $\mathcal{C}$ over $\mathbb{F}_{q}$ is called a self-dual code if $\mathcal{C}=\mathcal{C}^{\perp}$. If $\mathcal{C}$ has generator matrix $G$ and parity check matrix $H$, then $\mathcal{C}=\mathrm{span}_{\mathbb{F}_{q}}(G)$ and $\mathcal{C}^{\perp}=\mathrm{span}_{\mathbb{F}_{q}}(H)$. Therefore, $\mathcal{C}$ is self-dual if and only if $\mathrm{span}_{\mathbb{F}_{q}}(G)=\mathrm{span}_{\mathbb{F}_{q}}(H)$.

In the following, we always assume the TGRS code $\mathcal{C}=e v_{\boldsymbol{\alpha}, \boldsymbol{v}}(\mathcal{S})$ in $(\ref{equ:2.2})$ and $ n=2k$. Obviously, $\mathcal{C}$ has generator matrix
\begin{equation}\label{equ:5.1}
G=\begin{pmatrix}
v_{1}&\cdots&v_{n}\\
v_{1}\alpha_{1}&\cdots&v_{n}\alpha_{n}\\
\vdots&\vdots&\vdots\\
v_{1}\alpha_{1}^{k-\ell-1}&\cdots&v_{n}\alpha_{n}^{k-\ell-1}\\
v_{1}\left(\alpha_{1}^{k-\ell}+\eta_{1}\alpha_{1}^k\right)&\cdots&v_{n}\left(\alpha_{n}^{k-\ell}+\eta_{1}\alpha_{n}^k\right)\\
\vdots&\vdots&\vdots\\
v_{1}\left(\alpha_{1}^{k-1}+\eta_{\ell}\alpha_{1}^{k+\ell-1}\right)&\cdots&v_{n}\left(\alpha_{n}^{k-1}+\eta_{\ell}\alpha_{n}^{k+\ell-1}\right)\\
\end{pmatrix}
\end{equation}
and $\mathcal{C}$ has parity check matrix
\begin{equation}\label{equ:5.2}
H=\begin{pmatrix}
\cdots&\frac{u_{j}}{v_{j}}&\cdots\\
\cdots&\frac{u_{j}}{v_{j}}\alpha_{j}&\cdots\\
\vdots&\vdots&\vdots\\
\cdots&\frac{u_{j}}{v_{j}}\alpha_{j}^{n-k-\ell-1}&\cdots\\
\cdots&\frac{u_{j}}{v_{j}}\alpha_{j}^{n-k-\ell}\left(1-\eta_{\ell}\sum\limits_{i=0}^{\ell}\sigma_{\ell-i}\alpha_{j}^i\right)&\cdots\\
\cdots&\frac{u_{j}}{v_{j}}\alpha_{j}^{n-k-\ell}\left(\sum\limits_{i=0}^{1}\sigma_{1-i}\alpha_{j}^i-\eta_{\ell-1}\sum\limits_{i=0}^{\ell+1}\sigma_{\ell+1-i}\alpha_{j}^i\right)&\cdots\\
\vdots&\vdots&\vdots\\
\cdots&\frac{u_{j}}{v_{j}}\alpha_{j}^{n-k-\ell}\left(\sum\limits_{i=0}^{\ell-1}\sigma_{\ell-1-i}\alpha_{j}^i-\eta_{1}\sum\limits_{i=0}^{2\ell-1}\sigma_{2\ell-1-i}\alpha_{j}^i\right)&\cdots\\
\end{pmatrix}_{(n-k)\times n},
\end{equation}
where $\sigma_{t}\,(0 \leq t \leq 2\ell-1)$ is the $t$-th elementary symmetric polynomial of $\alpha_1,\alpha_2,\cdots, \alpha_n$, i.e., $\prod\limits_{i=1}^{n}\left(x-\alpha_{i}\right)=\sum\limits_{j=0}^{n} \sigma_{j} x^{n-j}$.

The theorem~\cite[Theorem 2.8]{huang2021mds} is important in determining the self-dualness of TGRS codes with a single twist. We generalize it as in the following lemma.

\begin{lemma}\label{lemma:representationGenMatParMat}
	Let $n=2k$ with $\ell\leq\lfloor\frac{k-1}{3}\rfloor$. Let $G$ in $(\ref{equ:5.1})$ and $H$ in $(\ref{equ:5.2})$ be the generator matrix and parity check matrix of $\mathcal{C}$, respectively. Let $\boldsymbol{g}_{i}$ and $\boldsymbol{h}_{i}$ denote the $(i+1)$-th row of $G$ and $H$, respectively. If $\eta_{1}\cdots\eta_{\ell}\neq 0$, then  $\left\{\boldsymbol{g}_{0},\boldsymbol{g}_{1}, \ldots, \boldsymbol{g}_{k-1}\right\}$ and $\left\{\boldsymbol{h}_{0},\boldsymbol{h}_{1}, \ldots, \boldsymbol{h}_{k-1}\right\}$ are linear representation of each other, if and only if the following condition hold:
	
	  $(1)$ $\left\{\boldsymbol{g}_{0}, \boldsymbol{g}_{1}, \ldots, \boldsymbol{g}_{k-\ell-1}\right\}$ and $\left\{\boldsymbol{h}_{0}, \boldsymbol{h}_{1}, \ldots, \boldsymbol{h}_{k-\ell-1}\right\}$ are linear representation of each other.
	
	  $(2)$ $\left\{\boldsymbol{g}_{k-\ell},\boldsymbol{g}_{k-\ell+1},\cdots, \boldsymbol{g}_{k-1}\right\}$ and $\left\{\boldsymbol{h}_{k-\ell},\boldsymbol{h}_{k-\ell+1},\cdots, \boldsymbol{h}_{k-1}\right\}$ are linear representation of each other.
\end{lemma}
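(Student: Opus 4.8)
The plan is to treat the two implications separately, putting all the work into ``only if''. Write $\boldsymbol{u}=(u_1,\dots,u_n)$, and for a multiplier $\boldsymbol{w}\in(\mathbb{F}_q^*)^n$ and an integer $m$ set $\mathrm{GRS}_m(\boldsymbol{\alpha},\boldsymbol{w})=\{(w_jf(\alpha_j))_{j}:f\in\mathbb{F}_q[x],\ \deg f<m\}$; I will use the standard duality $\mathrm{GRS}_m(\boldsymbol{\alpha},\boldsymbol{w})^{\perp}=\mathrm{GRS}_{n-m}(\boldsymbol{\alpha},\boldsymbol{u}/\boldsymbol{w})$, where $\boldsymbol{u}/\boldsymbol{w}=(u_j/w_j)_j$. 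The ``if'' direction is immediate: $\mathrm{span}\{\boldsymbol{g}_0,\dots,\boldsymbol{g}_{k-1}\}$ is the sum of $\mathrm{span}\{\boldsymbol{g}_0,\dots,\boldsymbol{g}_{k-\ell-1}\}$ and $\mathrm{span}\{\boldsymbol{g}_{k-\ell},\dots,\boldsymbol{g}_{k-1}\}$, and likewise for the rows of $H$, so $(1)$ and $(2)$ force the two total spans to agree. For ``only if'' I first observe that ``linear representation of each other'' means $\mathrm{span}\{\boldsymbol{g}_i\}=\mathrm{span}\{\boldsymbol{h}_i\}=:V$; since $H$ is a parity-check matrix, $\mathrm{span}\{\boldsymbol{h}_i\}=V^{\perp}$, so the hypothesis is precisely self-duality $V=V^{\perp}$.

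The first $k-\ell$ rows of $G$ in $(\ref{equ:5.1})$ are the $\boldsymbol{v}$-scalings of $1,x,\dots,x^{k-\ell-1}$, hence $\mathrm{span}\{\boldsymbol{g}_0,\dots,\boldsymbol{g}_{k-\ell-1}\}=\mathrm{GRS}_{k-\ell}(\boldsymbol{\alpha},\boldsymbol{v})$; reading off $(\ref{equ:5.2})$ shows that every row of $H$ is the $\boldsymbol{u}/\boldsymbol{v}$-evaluation of a polynomial of degree $<k+\ell$, so $V\subseteq\mathrm{GRS}_{k+\ell}(\boldsymbol{\alpha},\boldsymbol{u}/\boldsymbol{v})$. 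Chaining these through $V=V^{\perp}$ yields $\mathrm{GRS}_{k-\ell}(\boldsymbol{\alpha},\boldsymbol{v})\subseteq V=V^{\perp}\subseteq\mathrm{GRS}_{k-\ell}(\boldsymbol{\alpha},\boldsymbol{v})^{\perp}=\mathrm{GRS}_{k+\ell}(\boldsymbol{\alpha},\boldsymbol{u}/\boldsymbol{v})$. A containment of GRS codes with the same $\boldsymbol{\alpha}$ forces a polynomial relation between the multipliers; concretely $\mathrm{GRS}_{k-\ell}(\boldsymbol{\alpha},\boldsymbol{v})\subseteq\mathrm{GRS}_{k+\ell}(\boldsymbol{\alpha},\boldsymbol{u}/\boldsymbol{v})$ forces $v_j^2/u_j=p(\alpha_j)$ for a single polynomial $p$ with $\deg p\le(k+\ell)-(k-\ell)=2\ell$.

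Next I would pin down $p$ using the top twisted row. Since $\boldsymbol{g}_{k-1}\in V\subseteq\mathrm{GRS}_{k+\ell}(\boldsymbol{\alpha},\boldsymbol{u}/\boldsymbol{v})$, it is the $\boldsymbol{u}/\boldsymbol{v}$-evaluation of some $\rho$ with $\deg\rho<k+\ell$; multiplying the identity defining $\boldsymbol{g}_{k-1}$ by $v_j^2/u_j=p(\alpha_j)$ gives $\rho\equiv p(x)\,(x^{k-1}+\eta_{\ell}x^{k+\ell-1})\pmod{\prod_i(x-\alpha_i)}$. This is where $\ell\le\lfloor\frac{k-1}{3}\rfloor$ is used: it gives $\deg\big(p(x)(x^{k-1}+\eta_{\ell}x^{k+\ell-1})\big)\le 2\ell+k+\ell-1<2k=n$, so there is no reduction modulo $\prod_i(x-\alpha_i)$ and the congruence is an equality of polynomials. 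Comparing degrees, $\deg\rho=\deg p+k+\ell-1<k+\ell$ forces $\deg p=0$, i.e.\ $v_j^2=\lambda u_j$ for a nonzero constant $\lambda$; then $\boldsymbol{g}_i=\lambda\boldsymbol{h}_i$ for $0\le i\le k-\ell-1$, which is $(1)$. For $(2)$ I would work inside $V\subseteq\mathrm{GRS}_{k+\ell}(\boldsymbol{\alpha},\boldsymbol{u}/\boldsymbol{v})$, identifying each $w\in V$ with its $\boldsymbol{u}/\boldsymbol{v}$-representative of degree $<k+\ell$: the low rows of $H$ contribute the degrees $0,\dots,k-\ell-1$, while every twisted row of $G$ and of $H$ has representative supported in degrees $[k-\ell,k+\ell-1]$ (for $\boldsymbol{g}_{k-\ell+j}$ it is $\lambda(x^{k-\ell+j}+\eta_{j+1}x^{k+j})$). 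Hence $\{w\in V:\text{representative has no term of degree}<k-\ell\}$ is $\ell$-dimensional and contains both twisted spans, forcing them to coincide.

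The step I expect to be the genuine obstacle is upgrading $\deg p\le 2\ell$ to $\deg p=0$. Because the two generating sets live in different scalings---$\boldsymbol{v}$ for $G$ and $\boldsymbol{u}/\boldsymbol{v}$ for $H$---one cannot compare their polynomial degrees directly, and the bridge is exactly the GRS-duality that converts self-orthogonality into the degree bound on $v_j^2/u_j$. The argument then closes only because the hypothesis $\ell\le\lfloor\frac{k-1}{3}\rfloor$ keeps the representative $p(x)(x^{k-1}+\eta_{\ell}x^{k+\ell-1})$ of degree below $n$, so that no wraparound modulo $\prod_i(x-\alpha_i)$ can hide the top coefficient. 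The remaining bookkeeping---verifying from $(\ref{equ:5.2})$, via the elementary symmetric polynomials $\sigma_i$, that each row of $H$ really is a $\boldsymbol{u}/\boldsymbol{v}$-evaluation of the claimed degree and support---is routine and underlies both $(1)$ and $(2)$.
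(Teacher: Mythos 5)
Your proof is correct, but it takes a genuinely different route from the paper's. The paper proves the ``only if'' direction by a pairing argument: the hypothesis $\ell\le\lfloor\frac{k-1}{3}\rfloor$ guarantees that every index $i\in\{0,\dots,k-\ell-1\}$ has a partner $j$ in the same range with $|i-j|=\ell$; writing $\boldsymbol{g}_i$ and $\boldsymbol{g}_j$ as combinations of all rows of $H$, the resulting polynomial identity $x^{\ell}f(x)=g(x)$ is compared coefficient by coefficient, and a triangular linear system (the matrices $C_1,C_2$, with $C_2$ invertible precisely because $\eta_1\cdots\eta_\ell\neq0$) forces the coefficients on the twisted rows of $H$ to vanish; part $(2)$ then uses part $(1)$ to write $v_t^2/u_t=h(\alpha_t)$ with $\deg h<k-\ell$ and runs a similar support comparison, with dimension counts closing both parts. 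You instead recast the hypothesis as self-duality $V=V^{\perp}$, sandwich $V$ between $\mathrm{GRS}_{k-\ell}(\boldsymbol{\alpha},\boldsymbol{v})$ and $\mathrm{GRS}_{k+\ell}(\boldsymbol{\alpha},\boldsymbol{u}/\boldsymbol{v})$ via classical GRS duality, and deduce $v_j^2/u_j=p(\alpha_j)$ with $\deg p\le 2\ell$; the hypothesis on $\ell$ enters only once, to keep $\deg\bigl(p(x)(x^{k-1}+\eta_{\ell}x^{k+\ell-1})\bigr)<n$ so that the leading coefficient forces $\deg p=0$. This yields $v_j^2=\lambda u_j$ --- i.e.\ condition $(1)$ of Theorem~5.2 --- already inside the lemma, after which $(1)$ is immediate ($\boldsymbol{g}_i=\lambda\boldsymbol{h}_i$ for $i\le k-\ell-1$) and $(2)$ follows from your observation that both twisted spans sit inside the $\ell$-dimensional subspace of $V$ whose $\boldsymbol{u}/\boldsymbol{v}$-representatives have no terms of degree $<k-\ell$. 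What each buys: your argument is shorter, avoids the pairing trick and the explicit matrices $C_1,C_2$, and front-loads the multiplier relation that the paper only extracts later (in Theorem~5.2, by appeal to Huang et al.); the paper's argument is more elementary and self-contained, never invoking GRS duality, and produces conditions $(1)$ and $(2)$ in exactly the form reused there. Two small points you should make explicit for full rigor: (i) the claim that $\mathrm{GRS}_{k-\ell}(\boldsymbol{\alpha},\boldsymbol{v})\subseteq\mathrm{GRS}_{k+\ell}(\boldsymbol{\alpha},\boldsymbol{u}/\boldsymbol{v})$ forces $v_j^2/u_j=p(\alpha_j)$ with $\deg p\le 2\ell$ needs the two-monomial argument (compare the representatives of $1$ and of $x^{k-\ell-1}$, legitimate because $(k+\ell-1)+(k-\ell-1)=n-2<n$); (ii) both twisted spans have dimension exactly $\ell$ (immediate from $\rank G=k$ and $\rank H=n-k$), which is what upgrades ``contained in an $\ell$-dimensional space'' to the equality claimed in $(2)$.
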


\begin{proof}
	$\Leftarrow$\ It's obvious.\\
	$\Rightarrow$\ $(1)$\ Because $\ell\leq\lfloor\frac{k-1}{3}\rfloor$, $\forall i\in\{0,1,\cdots,k-\ell-1\}$, $\exists j\in\{0,1,\cdots,k-\ell-1\}$ such that $\left|i-j\right|=\ell$. For simplicity, we suppose that $0\leq i<j=i+\ell\leq k-\ell-1$. If $\left\{\boldsymbol{g}_{0},\boldsymbol{g}_{1},\cdots,\boldsymbol{g}_{k-1}\right\}$ and $\left\{\boldsymbol{h}_{0},\boldsymbol{h}_{1},\cdots,\boldsymbol{h}_{k-1}\right\}$ are representation of each other, then $\boldsymbol{g}_{i},\boldsymbol{g}_{j}\in \mathrm{span}_{\mathbb{F}_{q}}\{\boldsymbol{h}_{0},\boldsymbol{h}_{1},\cdots,\boldsymbol{h}_{k-1}\}$. In other words, $\boldsymbol{g}_{i}=(a_{0},a_{1},\cdots,a_{k-1})H,\boldsymbol{g}_{j}=(b_{0},b_{1},\cdots,b_{k-1})H$ with $a_{0},a_{1},\cdots,a_{k-1}$ not all zeros elements in $\mathbb{F}_{q}$ and $b_{0},b_{1},\cdots,b_{k-1}$ not all zero elements in $\mathbb{F}_{q}$, i.e., there exists $$f(x)=\sum\limits_{i=0}^{k-\ell-1}a_{i}x^i+\sum\limits_{i=0}^{\ell-1}a_{k-\ell+i}m_{i}(x),g(x)=\sum\limits_{i=0}^{k-\ell-1}b_{i}x^i+\sum\limits_{i=0}^{\ell-1}b_{k-\ell+i}m_{i}(x)$$ such that 
	\begin{equation*}
	\frac{v_{t}^2}{u_{t}}\alpha_{t}^i=f(\alpha_{t}),\quad \frac{v_{t}^2}{u_{t}}\alpha_{t}^j=g(\alpha_{t}),1\leq t\leq n,
	\end{equation*}
	where  $$m_{i}(x)=x^{k-\ell}(\sum\limits_{j=0}^i\sigma_{i-j}x^j-\eta_{\ell-i}\sum\limits_{j=0}^{i+\ell}\sigma_{i+\ell-j}x^j),0\leq i\leq \ell-1.$$ 
	So $\alpha_{t}^\ell f(\alpha_{t})=g(\alpha_{t}),t=1,2,\cdots,n$. Because $\alpha_{1},\cdots,\alpha_{n}$ are different roots of $f(x)x^\ell-g(x)$ and $\deg(f(x)x^\ell-g(x))\leq k+\ell-1+\ell<n$, we then obtain $f(x)x^\ell-g(x)=0$. Consequently, cofficients of $f(x)x^\ell-g(x)$ are equal to 0. So we have  
	\begin{equation*}
	C_{1}^T \begin{pmatrix}
	a_{k-\ell}\\a_{k-\ell+1}\\ \vdots\\a_{k-1}\end{pmatrix}=C_{2}^T\begin{pmatrix}
	b_{k-\ell}\\b_{k-\ell+1}\\ \vdots\\ b_{k-1}
	\end{pmatrix},C_{2}^T \begin{pmatrix}
	a_{k-\ell}\\a_{k-\ell+1}\\ \vdots\\a_{k-1}\end{pmatrix}
	=\begin{pmatrix}
	0\\0\\ \vdots\\0	
	\end{pmatrix},\end{equation*}
	where \begin{equation*}
	C_{1}=\left(\begin{array}{cccc}
	1-\eta_{\ell}\sigma_{\ell}&-\eta_{\ell}\sigma_{\ell-1}&\cdots&-\eta_{\ell}\sigma_{1}\\
	\sigma_{1}-\eta_{\ell-1}\sigma_{\ell+1}&1-\eta_{\ell-1}\sigma_{\ell}&\cdots&-\eta_{\ell-1}\sigma_{2}\\
	\vdots&\vdots&\ddots&\vdots\\
	\sigma_{\ell-1}-\eta_{1}\sigma_{2\ell-1}&\sigma_{\ell-2}-\eta_{1}\sigma_{2\ell-2}&\cdots&1-\eta_{1}\sigma_{\ell}
	\end{array}\right)
	\end{equation*}
	and \begin{equation*}
	C_{2}=\left(\begin{array}{ccccc}
	-\eta_{\ell}&0&0&\cdots&0\\
	-\eta_{\ell-1}\sigma_{1}&-\eta_{\ell-1}&0&\cdots&0\\
	\vdots&\vdots&\vdots&\ddots&\vdots\\
	-\eta_{1}\sigma_{\ell-1}&-\eta_{1}\sigma_{\ell-2}&-\eta_{1}\sigma_{\ell-3}&\cdots&-\eta_{1}
	\end{array}\right).
	\end{equation*}
	
	From the above linear equations, we can obtain $a_{k-\ell}=\cdots=a_{k-1}=b_{k-\ell}=\cdots=b_{k-1}=0$. In other words, $\boldsymbol{g}_{i},\boldsymbol{g}_{j}\in\mathrm{span}_{\mathbb{F}_{q}}\{\boldsymbol{h}_{0},\boldsymbol{h}_{1},\cdots,\boldsymbol{h}_{k-\ell-1}\}$. So $\mathrm{span}_{\mathbb{F}_{q}}\{\boldsymbol{g}_{0},\boldsymbol{g}_{1},\cdots,\boldsymbol{g}_{k-\ell-1}\}\subseteq\mathrm{span}_{\mathbb{F}_{q}}\{\boldsymbol{h}_{0},\boldsymbol{h}_{1},\cdots,\boldsymbol{h}_{k-\ell-1}\}$. It is obvious that $\dim(\mathrm{span}_{\mathbb{F}_{q}}\{\boldsymbol{g}_{0},\boldsymbol{g}_{1},\cdots,\boldsymbol{g}_{k-\ell-1}\})= \dim(\mathrm{span}_{\mathbb{F}_{q}}\{\boldsymbol{h}_{0},\boldsymbol{h}_{1},\cdots,\boldsymbol{h}_{k-\ell-1}\})=k-\ell$. Thus, $\left\{\boldsymbol{g}_{0},\boldsymbol{g}_{1},\cdots,\boldsymbol{g}_{k-\ell-1}\right\}$ and $\left\{\boldsymbol{h}_{0},\boldsymbol{h}_{1},\cdots,\boldsymbol{h}_{k-\ell-1}\right\}$ are linear representation of each other.
	
	(2)\ For each $ j\in\{k-\ell,\cdots,k-1\}$, due to $\mathrm{span}_{\mathbb{F}_{q}}\{\boldsymbol{g}_{0},\boldsymbol{g}_{1},\cdots,\boldsymbol{g}_{k-\ell-1}\}=\mathrm{span}_{\mathbb{F}_{q}}\{\boldsymbol{h}_{0},\boldsymbol{h}_{1},\cdots,\boldsymbol{h}_{k-\ell-1}\}$, thus $\boldsymbol{g}_{0}=(c_{0},c_{1},\cdots,c_{k-\ell-1})(\boldsymbol{h}_{0}^T,\boldsymbol{h}_{1}^T,\cdots,\boldsymbol{h}_{k-\ell-1}^T)^T$ with $c_{0},c_{1},\cdots,c_{k-\ell-1}$ not all zero elements in $\mathbb{F}_{q}$. That is, there exists $h(x)=\sum\limits_{i=0}^{k-\ell-1}c_{i}x^i\in \mathbb{F}_{q}[x]$ such that \begin{equation*}
	\frac{v_{t}^2}{u_{t}}=h(\alpha_{t}),\quad 1\leq t\leq n.
	\end{equation*}
	Moreover, $\boldsymbol{g}_{j}=(d_{0},d_{1},\cdots,d_{k-1})H$ with $d_{0},d_{1},\cdots,d_{k-1}$ not all zero elements in $\mathbb{F}_{q}$. That is, there exists $p(x)=\sum\limits_{i=0}^{k-\ell-1}d_{i}x^i+\sum\limits_{i=0}^{\ell-1}d_{k-\ell+i}m_{i}(x)\in \mathbb{F}_{q}[x]$ such that \begin{equation*}
	\frac{v_{t}^2}{u_{t}}\left(\alpha_{t}^j+\eta_{j-k+\ell+1}\alpha_{t}^{j+\ell}\right)=p(\alpha_{t}),1\leq t\leq n.
	\end{equation*}
	Noting that $$\deg(h(x)\left(x^j+\eta_{j-k+\ell+1}x^{j+\ell}\right)-p(x))\leq n-2<n$$ and $\alpha_{1},\cdots,\alpha_{n}$ are different roots of  $$h(x)\left(x^j+\eta_{j-k+\ell+1}x^{j+\ell}\right)-p(x),$$ 
	we then obtain $$h(x)(x^j+\eta_{j-k+\ell+1}x^{j+\ell})=p(x).$$ Consequently, coefficients of $h(x)(x^j+\eta_{j-k+\ell+1}x^{j+\ell})-p(x)$ are equal to $0$. We then obtain $$d_{0}=d_{1}=\cdots=d_{k-\ell-1}=0.$$ In other words, $\boldsymbol{g}_{j}\in\mathrm{span}_{\mathbb{F}_{q}}\{\boldsymbol{h}_{k-\ell},\boldsymbol{h}_{k-\ell+1},\cdots,\boldsymbol{h}_{k-1}\}$. Thus, 
	$$\mathrm{span}_{\mathbb{F}_{q}}\{\boldsymbol{g}_{k-\ell},\cdots,\boldsymbol{g}_{k-1}\}\subseteq \mathrm{span}_{\mathbb{F}_{q}}\{\boldsymbol{h}_{k-\ell},\cdots,\boldsymbol{h}_{k-1}\}.$$ On the other hand,  $\dim(\mathrm{span}_{\mathbb{F}_{q}}\{\boldsymbol{g}_{k-\ell},\cdots,\boldsymbol{g}_{k-1}\})=\dim( \mathrm{span}_{\mathbb{F}_{q}}\{\boldsymbol{h}_{k-\ell},\cdots,\boldsymbol{h}_{k-1}\})$. Thus, $\left\{\boldsymbol{g}_{k-\ell},\cdots,\boldsymbol{g}_{k-1}\right\}$ and $\left\{\boldsymbol{h}_{k-\ell},\cdots,\boldsymbol{h}_{k-1}\right\}$ are linear representation of each other.
\end{proof}

\begin{theorem}\label{thmcont:5.2}
	Let $n=2k$ with $\ell\leq \lfloor\frac{k-1}{3}\rfloor$. Let $\alpha_{1},\alpha_{2},\cdots,\alpha_{n}$ be distinct elements of $\mathbb{F}_{q}$, $\prod\limits_{i=1}^{n}(x-\alpha_{i})=\sum\limits_{j=0}^{n}\sigma_{j}x^{n-j}$
	and $u_{i}=\prod\limits_{j=1,j\neq i}^{n}(\alpha_{i}-\alpha_{j})^{-1}$
	for $1\leq i\leq n$. Let $v_{i}\in \mathbb{F}_{q}^{*}$ 	for $1\leq i\leq n$ and $\prod\limits_{i=1}^{\ell}\eta_{i}\neq 0$. Then
	$\mathcal{C}= ev_{\alpha,v}(\mathcal{S})$ is self-dual if and only if the following conditions hold:
	
	$(1)$\ There exists a $\lambda\in \mathbb{F}_{q}^{*}$ such that $v_{i}^2=\lambda u_{i}$ for all $1\leq i\leq n$.
	
	$(2)$\ $\sigma_{1}=\cdots=\sigma_{\ell-1}=\sigma_{\ell+1}=\cdots=\sigma_{2\ell-1}=0$ and $\frac{1}{\eta_{i}}+\frac{1}{\eta_{\ell+1-i}}=\sigma_{\ell},i=1,2,\cdots,\lceil\frac{\ell+1}{2}\rceil$.
\end{theorem}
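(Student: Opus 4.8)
The plan is to reduce the self-duality of $\mathcal{C}$ to the two structural conditions supplied by Lemma~\ref{lemma:representationGenMatParMat}, and then to translate each of them into an explicit condition on $\boldsymbol v$, the $\sigma_j$ and the $\eta_j$. Since $G$ generates $\mathcal{C}$ and $H$ generates $\mathcal{C}^{\perp}$, the code $\mathcal{C}$ is self-dual exactly when $\mathrm{span}_{\mathbb{F}_q}(G)=\mathrm{span}_{\mathbb{F}_q}(H)$, i.e.\ when $\{\boldsymbol g_0,\dots,\boldsymbol g_{k-1}\}$ and $\{\boldsymbol h_0,\dots,\boldsymbol h_{k-1}\}$ are linear representations of each other. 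The hypotheses $n=2k$, $\ell\le\lfloor(k-1)/3\rfloor$ and $\eta_1\cdots\eta_\ell\neq 0$ are precisely those of Lemma~\ref{lemma:representationGenMatParMat}, so this holds iff (i) the first $k-\ell$ rows of $G$ and of $H$ represent each other, and (ii) the last $\ell$ rows do. I will show (i) $\Leftrightarrow$ condition $(1)$, and then, \emph{assuming} $(1)$, show (ii) $\Leftrightarrow$ condition $(2)$; this is logically complete since $(1)$ is forced whenever $\mathcal{C}$ is self-dual.

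For (i), the first $k-\ell$ rows of $G$ span $GRS_{k-\ell}(\boldsymbol\alpha,\boldsymbol v)$ and those of $H$ span $GRS_{k-\ell}(\boldsymbol\alpha,\boldsymbol u/\boldsymbol v)$, where $(\boldsymbol u/\boldsymbol v)_j=u_j/v_j$. These two GRS codes coincide iff their multiplier vectors are proportional. Indeed, equality forces $(v_j)_j\in GRS_{k-\ell}(\boldsymbol\alpha,\boldsymbol u/\boldsymbol v)$ and $(u_j/v_j)_j\in GRS_{k-\ell}(\boldsymbol\alpha,\boldsymbol v)$, producing polynomials $p,q$ of degree $<k-\ell$ with $v_j^2/u_j=p(\alpha_j)$ and $u_j/v_j^2=q(\alpha_j)$, whence $p(\alpha_j)q(\alpha_j)=1$ at all $n$ points. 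As $\deg(pq-1)\le 2(k-\ell)-2<n$, we get $pq\equiv1$, so $p$ is a nonzero constant $\lambda$ and $v_j^2=\lambda u_j$; the converse is immediate since then $\boldsymbol u/\boldsymbol v=\lambda^{-1}\boldsymbol v$. This is exactly condition $(1)$, and is the $\ell$-fold analogue of the single-twist argument in~\cite{huang2021mds}.

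For (ii), assume $(1)$, so that $u_j/v_j=v_j/\lambda$. Then the last $\ell$ rows of $G$ are $ev_{\boldsymbol\alpha,\boldsymbol v}$ applied to $p_i(x)=x^{k-\ell+i}+\eta_{i+1}x^{k+i}$ $(0\le i\le\ell-1)$, while the last $\ell$ rows of $H$ are $ev_{\boldsymbol\alpha,\boldsymbol v}$ (up to the global scalar $\lambda^{-1}$, which is irrelevant to the span) applied to $m_s(x)=x^{k-\ell}\bigl(\sum_{w=0}^{s}\sigma_{s-w}x^{w}-\eta_{\ell-s}\sum_{w=0}^{s+\ell}\sigma_{s+\ell-w}x^{w}\bigr)$. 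All of these polynomials lie in $W=\mathrm{span}\{x^{k-\ell},\dots,x^{k+\ell-1}\}$ and have degree $\le k+\ell-1<n$, so $ev_{\boldsymbol\alpha,\boldsymbol v}$ is injective on $W$; hence (ii) is equivalent to the equality $V_G=V_H$ of the $\ell$-dimensional subspaces $V_G=\mathrm{span}\{p_i\}$ and $V_H=\mathrm{span}\{m_s\}$ of $W$. Writing a vector of $W$ by its coefficient string $(b_0,\dots,b_{2\ell-1})$ of $x^{k-\ell},\dots,x^{k+\ell-1}$, one reads off $V_G=\{b:\ b_{\ell+i}=\eta_{i+1}b_i,\ 0\le i\le\ell-1\}$. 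Since $\dim V_G=\dim V_H=\ell$, it suffices to impose $V_H\subseteq V_G$, i.e.\ to require $(m_s)_{\ell+i}=\eta_{i+1}(m_s)_i$ for every $s,i$.

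Finally I would carry out this test. A direct coefficient reading gives $(m_s)_i=\mathbf 1_{\{i\le s\}}\sigma_{s-i}-\eta_{\ell-s}\sigma_{s+\ell-i}$ and $(m_s)_{\ell+i}=-\eta_{\ell-s}\mathbf 1_{\{i\le s\}}\sigma_{s-i}$, so $m_s\in V_G$ becomes
\[
\eta_{i+1}\eta_{\ell-s}\,\sigma_{s+\ell-i}=\mathbf 1_{\{i\le s\}}(\eta_{i+1}+\eta_{\ell-s})\,\sigma_{s-i},\qquad 0\le s,i\le\ell-1 .
\]
Splitting into $i>s$, $i<s$ and $i=s$ is the crux. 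The case $i>s$ forces $\sigma_m=0$ for $m=1,\dots,\ell-1$; feeding this back, the case $i<s$ then forces $\sigma_m=0$ for $m=\ell+1,\dots,2\ell-1$; and the diagonal $i=s$ (where $\sigma_0=1$) yields $\eta_{s+1}\eta_{\ell-s}\sigma_\ell=\eta_{s+1}+\eta_{\ell-s}$, i.e.\ $\sigma_\ell=\eta_{s+1}^{-1}+\eta_{\ell-s}^{-1}$, which with $j=s+1$ is exactly $\tfrac1{\eta_j}+\tfrac1{\eta_{\ell+1-j}}=\sigma_\ell$; the symmetry $j\leftrightarrow\ell+1-j$ collapses these to $j=1,\dots,\lceil(\ell+1)/2\rceil$. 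These are precisely the equations of condition $(2)$, and reversing each implication shows they are also sufficient. I expect the only real difficulty to be bookkeeping: tracking the index ranges of the two $\sigma$-sums inside $m_s$ so that the three cases partition correctly and the low-degree vanishing $\sigma_1=\cdots=\sigma_{\ell-1}=0$ is established before it is used to extract the high-degree vanishing $\sigma_{\ell+1}=\cdots=\sigma_{2\ell-1}=0$.
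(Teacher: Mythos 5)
Your proposal is correct and follows essentially the same route as the paper: both invoke Lemma~\ref{lemma:representationGenMatParMat} to split self-duality into the two row-block conditions, identify the first block with condition $(1)$ via proportionality of GRS multipliers, and reduce the second block (under $(1)$) to exactly the coefficient equations that yield condition $(2)$. The differences are presentational only: you prove the first equivalence directly where the paper cites the argument of \cite{huang2021mds}, and you phrase the twist-block comparison in terms of polynomial coefficient vectors in $\mathrm{span}\{x^{k-\ell},\dots,x^{k+\ell-1}\}$ rather than the paper's matrix factorization, arriving at the identical system of equations and case analysis.
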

\begin{proof}
	We know that $\mathcal{C}$ has generator matrix $G$ as $(\ref{equ:5.1})$ and parity check matrix $H$ as $(\ref{equ:5.2})$. Let $\boldsymbol{g}_{i}$ and $\boldsymbol{h}_{i}$ denote the $(i+1)$-th row of $G$ and $H$, respectively. By  Lemma~\ref{lemma:representationGenMatParMat},  $\mathcal{C}$ is self-dual if and only if $\left\{\boldsymbol{g}_{0}, \ldots, \boldsymbol{g}_{k-1}\right\}$ and $\left\{\boldsymbol{h}_{0}, \ldots, \boldsymbol{h}_{k-1}\right\}$ are linear representation of each other, if and only if (1) $\left\{\boldsymbol{g}_{0},  \ldots, \boldsymbol{g}_{k-\ell-1}\right\}$ and $\left\{\boldsymbol{h}_{0},  \ldots, \boldsymbol{h}_{k-\ell-1}\right\}$ are linear representation of each other and (2) $\left\{\boldsymbol{g}_{k-\ell},\cdots, \boldsymbol{g}_{k-1}\right\}$ and $\left\{\boldsymbol{h}_{k-\ell},\cdots, \boldsymbol{h}_{k-1}\right\}$ are linear representation of each other.
	
	 Let $\boldsymbol{\alpha}^{i}=\left(\alpha_{1}^{i}, \alpha_{2}^{i}, \ldots, \alpha_{n}^{i}\right)$ and $\boldsymbol{\frac{u}{v}}=\left(\frac{u_{1}}{v_{1}}, \frac{u_{2}}{v_{2}}, \ldots, \frac{u_{n}}{v_{n}}\right)$. Similar to the proof of \cite[Theorem 2.8]{huang2021mds}, we know that  $\left\{\boldsymbol{g}_{0}, \ldots, \boldsymbol{g}_{k-\ell-1}\right\}$ and $\left\{\boldsymbol{h}_{0}, \ldots, \boldsymbol{h}_{k-\ell-1}\right\}$ are linear representation of each other if and only if $\boldsymbol{v}=\lambda \boldsymbol{\frac{u}{v}}$, for some $\lambda \in \mathbb{F}_{q}^{*}$. On the other hand,
	\begin{equation*}
	\begin{aligned}
	&{\left(\begin{array}{c}
	\boldsymbol{v} * \boldsymbol{\alpha}^{k-\ell} \\
	\boldsymbol{v} * \boldsymbol{\alpha}^{k-\ell+1} \\
	\vdots\\
	\boldsymbol{v} * \boldsymbol{\alpha}^{k+\ell-1}   \\
	\end{array}\right)}^T
	{\left(\begin{array}{cccccccc}
	1 & 0 & \cdots&0 & \eta_{1} & 0 &\cdots& 0 \\
	0 & 1 & \cdots&0 &   0 &\eta_{2}&\cdots & 0\\
	\vdots&\vdots&\ddots&\vdots&\vdots&\vdots&\ddots&\vdots\\
	0&0&\cdots&1&0&0&\cdots&\eta_{\ell}
	\end{array}\right)}^T={\left(\begin{array}{c}
	\boldsymbol{g}_{k-\ell}\\
	\boldsymbol{g}_{k-\ell+1} \\
	\vdots\\
	\boldsymbol{g}_{k-1}
	\end{array}\right)}^T,
	 \\
	&{\left(\begin{array}{c}
		\boldsymbol{\frac{u}{v}} * \boldsymbol{\alpha}^{k-\ell} \\
		\boldsymbol{\frac{u}{v}} * \boldsymbol{\alpha}^{k-\ell+1} \\
		\vdots\\
		\boldsymbol{\frac{u}{v}} * \boldsymbol{\alpha}^{k+\ell-1}   
		\end{array}\right)}^{T}\begin{pmatrix}
	1-\eta_{\ell}\sigma_{\ell}&\sigma_{1}-\eta_{\ell-1}\sigma_{\ell+1}&\cdots&\sigma_{\ell-1}-\sigma_{2\ell-1}\eta_{1}\\
	-\eta_{\ell}\sigma_{\ell-1}&1-\eta_{\ell-1}\sigma_{\ell}&\cdots&\sigma_{\ell-2}-\sigma_{2\ell-2}\eta_{1}\\
	\vdots&\vdots&\vdots&\vdots\\
	-\eta_{\ell}&-\eta_{\ell-1}\sigma_{1}&\cdots&-\sigma_{\ell-1}\eta_{1}\\
	0&-\eta_{\ell-1}&\cdots&-\sigma_{\ell-2}\eta_{1}\\
	\vdots&\vdots&\vdots&\vdots\\
	0&0&\cdots&-\eta_{1}
	\end{pmatrix}={\left(\begin{array}{c}
		\boldsymbol{h}_{k-\ell} \\
		\boldsymbol{h}_{k-\ell+1}\\
		\vdots\\
		\boldsymbol{h}_{k-1}
		\end{array}\right)}^T\\
	\end{aligned}
	\end{equation*}
	where $*$ denotes componentwise product. Then $\left\{\boldsymbol{g}_{k-\ell},\boldsymbol{g}_{k-\ell+1},\cdots,\boldsymbol{g}_{k-1}\right\}$ and $\left\{\boldsymbol{h}_{k-\ell},\boldsymbol{h}_{k-\ell+1},\cdots,\boldsymbol{h}_{k-1}\right\}$ are linear representation of each other if and only if the $\rank$ of these two coefficient matrices is equal and
	\begin{equation}\label{equ:5.3}
	\begin{pmatrix}
	1-\eta_{\ell}\sigma_{\ell}\\-\eta_{\ell}\sigma_{\ell-1}\\
	\vdots\\-\eta_{\ell}\\0 \\ \vdots\\0
	\end{pmatrix}
	,\begin{pmatrix}
	\sigma_{1}-\eta_{\ell-1}\sigma_{\ell+1}\\
	1-\eta_{\ell-1}\sigma_{\ell}\\
	\vdots\\ -\eta_{\ell-1}\sigma_{1}\\
	-\eta_{\ell-1}\\
	\vdots\\0
	\end{pmatrix},\cdots,
	\begin{pmatrix}
	\sigma_{\ell-1}-\sigma_{2\ell-1}\eta_{1}\\
	\sigma_{\ell-2}-\sigma_{2\ell-2}\eta_{1}\\
	\vdots\\-\sigma_{\ell-1}\eta_{1}\\-\sigma_{\ell-2}\eta_{1}\\ \vdots\\ -\eta_{1}
	\end{pmatrix}
	\end{equation} 
	can be linearly expressed by 
	\begin{equation}\label{equ:5.4}
	\begin{pmatrix}
	1\\0\\ \vdots\\0\\ \eta_{1}\\0 \\ \vdots\\0
	\end{pmatrix},\begin{pmatrix}
	0\\1\\ \vdots\\0\\ 0\\ \eta_{2}\\ \vdots\\0
	\end{pmatrix},\cdots
	,\begin{pmatrix}
	0\\0\\ \vdots\\1\\ 0\\0\\ \vdots\\ \eta_{\ell}
	\end{pmatrix}.
	\end{equation}
	Obviously, the $\rank$ of these two coefficient matrices is equal, so $\left\{\boldsymbol{g}_{k-\ell},\cdots,\right.$ $\left.\boldsymbol{g}_{k-1}\right\}$ and $\left\{\boldsymbol{h}_{k-\ell},\cdots,\boldsymbol{h}_{k-1}\right\}$ are linear representation of each other, if and only if $(\ref{equ:5.3})$ can be linearly expressed by $(\ref{equ:5.4})$, if and only if \begin{equation*}
	\left\{
	\begin{array}{cc}
	\eta_{1}(1-\eta_{\ell}\sigma_{\ell})=-\eta_{\ell}\\
	-\eta_{2}\eta_{\ell}\sigma_{\ell-1}=0\\
	\vdots\\
	-\eta_{\ell}^2\sigma_{1}=0
	\end{array}\right.,
	\left\{
	\begin{array}{cc}
	\eta_{1}(\sigma_{1}-\eta_{\ell-1}\sigma_{\ell+1})=-\eta_{\ell-1}\sigma_{1}\\
	\eta_{2}(1-\eta_{\ell-1}\sigma_{\ell})=-\eta_{\ell-1}\\
	\vdots\\
	-\eta_{\ell-1}\eta_{\ell}\sigma_{2}=0
	\end{array}\right.,\cdots,
	\end{equation*}
	\begin{equation*}
	\left\{
	\begin{array}{ll}
	\eta_{1}(\sigma_{\ell-1}-\sigma_{2\ell-1}\eta_{1})=-\sigma_{\ell-1}\eta_{1}\\
	\eta_{2}(\sigma_{\ell-2}-\sigma_{2\ell-2}\eta_{1})=-\sigma_{\ell-2}\eta_{1}\\
	\quad\quad\quad\quad\quad\vdots\\
	\eta_{\ell}(1-\sigma_{\ell}\eta_{1})=-\eta_{1}
	\end{array}\right.,
	\end{equation*}
	if and only if $\sigma_{1}=\sigma_{2}=\cdots=\sigma_{\ell-1}=\sigma_{\ell+1}=\sigma_{\ell+2}=\cdots=\sigma_{2\ell-1}=0$ and $\frac{1}{\eta_{i}}+\frac{1}{\eta_{\ell+1-i}}=\sigma_{\ell},i=1,2,\cdots,\lceil\frac{\ell+1}{2}\rceil$. It completes the proof.
\end{proof}

For $\ell=2$, comparing with the result of~\cite{9828478}, the twists are different. And we obtain a new necessary and sufficient condition of $\mathcal{C}=ev_{\boldsymbol{\alpha},\boldsymbol{v}}(\mathcal{S})$ to be self-dual.
\begin{corollary}\label{corcont:5.3}
	Let $n=2k$ with $k\geq 6$. Let $\alpha_{1},\alpha_{2},\cdots,\alpha_{n}$ be distinct elements of $\mathbb{F}_{q}$, $\prod\limits_{i=1}^{n}(x-\alpha_{i})=\sum\limits_{j=0}^{n}\sigma_{j}x^{n-j}$
	and $u_{i}=\prod\limits_{j=1,j\neq i}^{n}(\alpha_{i}-\alpha_{j})^{-1}$
	for $1\leq i\leq n$. Let $v_{i}\in \mathbb{F}_{q}^{*}$ 	for $1\leq i\leq n$ and $\eta_{1},\eta_{2}\neq 0$. Then
	$\mathcal{C}= ev_{\boldsymbol{\alpha},\boldsymbol{v}}(\mathcal{S})$ is self-dual if and only if the following conditions hold:

    $(1)$\ There exists a $\lambda\in \mathbb{F}_{q}^{*}$ such that $v_{i}^2=\lambda u_{i}$ for all $1\leq i\leq n$.
    
     $(2)$\ $\sigma_{1}=0,\eta_{1}+\eta_{2}=\eta_{1}\eta_{2}\sigma_{2}$. 
\end{corollary}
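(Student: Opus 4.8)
The plan is to obtain the corollary as the case $\ell=2$ of Theorem~\ref{thmcont:5.2}, after first closing the small gap in the hypotheses. Theorem~\ref{thmcont:5.2} assumes $\ell\leq\lfloor\frac{k-1}{3}\rfloor$, which for $\ell=2$ reads $k\geq 7$, while the corollary asserts the weaker $k\geq 6$. The bound $\ell\leq\lfloor\frac{k-1}{3}\rfloor$ enters only through part~$(1)$ of Lemma~\ref{lemma:representationGenMatParMat}, where it guarantees that every index $i\in\{0,1,\dots,k-\ell-1\}$ admits some $j$ in the same range with $|i-j|=\ell$. For $\ell=2$ the range is $\{0,1,\dots,k-3\}$, and the only index that could fail is $i=1$, which needs $j=3\leq k-3$, i.e. $k\geq 6$. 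So I would first record that the decoupling of Lemma~\ref{lemma:representationGenMatParMat}, and hence the conclusion of Theorem~\ref{thmcont:5.2}, already survives under $k\geq 6$ when $\ell=2$; the degree bounds in that proof (e.g. $\deg(f(x)x^{\ell}-g(x))<n$) hold as soon as $k\geq 4$ and so cause no trouble at the boundary.

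With the decoupling in hand, condition~$(1)$ transfers verbatim: the untwisted rows $\boldsymbol{g}_0,\dots,\boldsymbol{g}_{k-3}$ and $\boldsymbol{h}_0,\dots,\boldsymbol{h}_{k-3}$ represent each other exactly when $v_i^{2}=\lambda u_i$ for some $\lambda\in\mathbb{F}_q^{*}$, by the single-twist argument of \cite[Theorem~2.8]{huang2021mds}. For condition~$(2)$ I specialize the rank/expressibility criterion in the proof of Theorem~\ref{thmcont:5.2}. At $\ell=2$ the coefficient columns of $(\ref{equ:5.3})$ become $(1-\eta_2\sigma_2,\,-\eta_2\sigma_1,\,-\eta_2,\,0)^{T}$ and $(\sigma_1-\eta_1\sigma_3,\,1-\eta_1\sigma_2,\,-\eta_1\sigma_1,\,-\eta_1)^{T}$, while the columns of $(\ref{equ:5.4})$ become $(1,0,\eta_1,0)^{T}$ and $(0,1,0,\eta_2)^{T}$. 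Since a vector $(a,b,c,d)^{T}$ lies in the span of the latter two exactly when $c=a\eta_1$ and $d=b\eta_2$, the whole of condition~$(2)$ collapses to a handful of scalar equations.

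Solving these is the heart of the matter. The first column gives $-\eta_2=(1-\eta_2\sigma_2)\eta_1$ and $-\eta_2^{2}\sigma_1=0$, that is, $\sigma_1=0$ (as $\eta_2\neq0$) and $\eta_1+\eta_2=\eta_1\eta_2\sigma_2$; the latter is exactly the reciprocal identity $\frac{1}{\eta_1}+\frac{1}{\eta_2}=\sigma_2$ of Theorem~\ref{thmcont:5.2} cleared of denominators, so this recovers condition~$(2)$ as stated. The step I expect to require the most care is the accompanying bookkeeping of which elementary symmetric functions are forced to vanish: a direct expansion of the \emph{second} column additionally yields $\eta_1\sigma_3=2\sigma_1$, hence $\sigma_3=0$ once $\sigma_1=0$. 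I would therefore verify whether $\sigma_3=0$ is a genuine extra constraint belonging in condition~$(2)$ or is rendered redundant by the running hypotheses ($n=2k$ and $v_i^{2}=\lambda u_i$). This is the one point where the $\ell=2$ reduction is not a purely mechanical substitution, and it is where I would concentrate the verification.
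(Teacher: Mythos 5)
Your route is the same one the paper takes: Corollary~\ref{corcont:5.3} is offered as the $\ell=2$ specialization of Theorem~\ref{thmcont:5.2} with no separate argument, and your two refinements of that specialization are both correct. The bound $\ell\leq\lfloor\frac{k-1}{3}\rfloor$ is indeed used only in part $(1)$ of Lemma~\ref{lemma:representationGenMatParMat}, to guarantee every $i\in\{0,1,\dots,k-\ell-1\}$ has a partner $j$ in that range with $|i-j|=\ell$; what that argument really needs is $k\geq 3\ell$, which for $\ell=2$ is $k\geq 6$, and the degree bounds cause no trouble. Your membership test for the span of the columns in $(\ref{equ:5.4})$ (namely $c=a\eta_1$, $d=b\eta_2$) and the resulting scalar equations are also correct.

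The gap is that you defer exactly the question on which the statement stands or falls, and it resolves \emph{against} the corollary as printed. The equation $\eta_1\sigma_3=2\sigma_1$ from the second column of $(\ref{equ:5.3})$ is not redundant: nothing in conditions $(1)$, $(2)$ or $n=2k$ constrains $\sigma_3$. Condition $(1)$ only restricts the quadratic character of the $\lambda u_i$ (in characteristic $2$ it is vacuous, since every element is a square), so one can take distinct evaluation points with $\sigma_1=0$ but $\sigma_3\neq 0$ (e.g.\ fix $\alpha_1,\dots,\alpha_{n-1}$ generically and let $\alpha_n=-(\alpha_1+\cdots+\alpha_{n-1})$), then choose $\eta_1,\eta_2\neq 0$ with $\eta_1+\eta_2=\eta_1\eta_2\sigma_2$. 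For such data $(1)$ and $(2)$ hold, yet $\boldsymbol{h}_{k-1}\notin\mathrm{span}_{\mathbb{F}_q}\{\boldsymbol{g}_{k-2},\boldsymbol{g}_{k-1}\}$, because membership forces $\eta_1\sigma_3=2\sigma_1=0$; so the code is not self-dual. In other words, $\sigma_3=0$ is a genuine extra necessary condition --- it is precisely the specialization of $\sigma_{\ell+1}=\cdots=\sigma_{2\ell-1}=0$ in Theorem~\ref{thmcont:5.2}$(2)$ --- and what your argument actually proves is the corrected statement with $\sigma_3=0$ added to condition $(2)$. The corollary as stated is inconsistent with its parent theorem (note the paper's own construction in Theorem~\ref{thmcont:5.4} does arrange $\sigma_3=0$ via $m_1=m_2=0$, so its examples are unaffected); your derivation exposes this, but as written your proposal cannot conclude the corollary in the form given.
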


Finally, we give an explicit construction of self-dual TGRS codes.
\begin{theorem}\label{thmcont:5.4}
	Let $q$ be an odd prime power such that $(q,\ell)=1$ and $7\ell\leq q^\ell-1$. Let $\mathbb{F}_{{q}^s}$ is the splitting field of $f(x)=x^\ell-a$ over $\mathbb{F}_{q}$, where $a\in \mathbb{F}_{q}^{*}$ and $s\leq\ell$.
	
	 $(1)$\ If $\ell$ is odd,  let  $m(x)=\frac{x^{q^s}-x}{f(x)}$ and  $\alpha_{i},1\leq i\leq q^s-\ell$ be all the roots of $m(x)$. There exist $v_{i}\in \mathbb{F}_{q^{2s}}$ such that $v_{i}^2=m^{'}(\alpha_{i})^{-1},1\leq i\leq q^s-\ell$.  Let $\eta_{i}\neq 0,a^{-1}$ and $\eta_{\ell+1-i}=\frac{1}{a-\eta_{i}^{-1}},i=1,\cdots,\frac{\ell+1}{2}$. Then $\mathcal{C}=ev_{\boldsymbol{\alpha},\boldsymbol{v}}(\mathcal{S})$ is a $[q^s-\ell,\frac{q^s-\ell}{2},\geq \frac{q^s-3\ell+2}{2}]$ self-dual code over $\mathbb{F}_{q^{2s}}$.
	 
	  $(2)$\ If $\ell$ is even, let  $m(x)=\frac{x^{q^s}-x}{xf(x)}$ and  $\alpha_{i},1\leq i\leq q^s-\ell-1$ be all the roots of $m(x)$. There exist $v_{i}\in \mathbb{F}_{q^{2s}}$ such that $v_{i}^2=m^{'}(\alpha_{i})^{-1},1\leq i\leq q^s-\ell-1$.  Let $\eta_{i}\neq 0,a^{-1}$ and $\eta_{\ell+1-i}=\frac{1}{a-\eta_{i}^{-1}},i=1,\cdots,\frac{\ell}{2}$. Then $\mathcal{C}=ev_{\boldsymbol{\alpha},\boldsymbol{v}}(\mathcal{S})$ is a $[q^s-\ell-1,\frac{q^s-\ell-1}{2},\geq \frac{q^s-3\ell+1}{2}]$ self-dual code over $\mathbb{F}_{q^{2s}}$.
\end{theorem}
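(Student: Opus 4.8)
The plan is to reduce the whole statement to the self-duality criterion of Theorem~\ref{thmcont:5.2}, so that only its two conditions need checking. First I would pin down the evaluation set. Since $(q,\ell)=1$, the polynomial $f(x)=x^\ell-a$ is separable, and as $\mathbb{F}_{q^s}$ is its splitting field all $\ell$ roots of $f$ lie in $\mathbb{F}_{q^s}$; hence $f(x)\mid x^{q^s}-x$ and the $\alpha_i$ are precisely the elements of $\mathbb{F}_{q^s}$ other than the roots of $f$ (case $\ell$ odd) or other than $0$ and the roots of $f$ (case $\ell$ even). In either case $m(x)=\prod_i(x-\alpha_i)$ is the monic vanishing polynomial of the evaluation points, so that $m'(\alpha_i)=\prod_{j\neq i}(\alpha_i-\alpha_j)=u_i^{-1}$, and the quantities $\sigma_j$ appearing in Theorem~\ref{thmcont:5.2} are exactly the coefficients of $m(x)$ written as $\sum_{j=0}^{n}\sigma_j x^{n-j}$.

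Condition~$(1)$ of Theorem~\ref{thmcont:5.2} is then immediate: the prescribed scalars satisfy $v_i^2=m'(\alpha_i)^{-1}=u_i$, i.e. the condition holds with $\lambda=1$. The only thing to justify is that these $v_i$ exist in $\mathbb{F}_{q^{2s}}^{*}$. Because $q$ is odd, $q^s$ is odd and $q^s-1\mid (q^{2s}-1)/2$, so every element of the subfield $\mathbb{F}_{q^s}^{*}\subseteq\mathbb{F}_{q^{2s}}^{*}$ is a square in $\mathbb{F}_{q^{2s}}$; since $m$ is separable, $m'(\alpha_i)\neq 0$, and a square root $v_i\neq 0$ exists.

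The heart of the argument, and the step I expect to be the main obstacle, is the explicit evaluation of $m(x)$ needed for condition~$(2)$. Put $t=(q^s-1)/\ell$, an integer because $\mathbb{F}_{q^s}$ contains the $\ell$-th roots of unity, so $\ell\mid q^s-1$; moreover each root $\beta$ of $f$ satisfies $\beta^{q^s-1}=1$, whence $a^{t}=1$. Using the factorization $x^{\ell t}-a^{t}=(x^\ell-a)\sum_{j=0}^{t-1}a^{j}x^{\ell(t-1-j)}$ I would carry out the division and obtain the closed forms
\begin{equation*}
m(x)=\sum_{r=0}^{t-1}a^{\,t-1-r}x^{\ell r+1}\quad(\ell\text{ odd}),\qquad
m(x)=\sum_{r=0}^{t-1}a^{\,t-1-r}x^{\ell r}\quad(\ell\text{ even}).
\end{equation*}
In both cases every nonzero monomial of $m(x)$ has degree congruent to $\deg m \pmod{\ell}$, so the coefficient $\sigma_j$ of $x^{n-j}$ vanishes unless $\ell\mid j$; in particular $\sigma_1=\cdots=\sigma_{\ell-1}=\sigma_{\ell+1}=\cdots=\sigma_{2\ell-1}=0$, and reading off the degree-$(n-\ell)$ term gives $\sigma_\ell=a$. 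Finally the prescription $\eta_{\ell+1-i}=\frac{1}{a-\eta_i^{-1}}$ rearranges to $\frac{1}{\eta_i}+\frac{1}{\eta_{\ell+1-i}}=a=\sigma_\ell$, which is exactly condition~$(2)$ (the self-paired middle index forcing $\eta_{(\ell+1)/2}=2/a$ when $\ell$ is odd, and the requirement $\eta_i\neq 0,a^{-1}$ keeping every $\eta$ nonzero and well defined).

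It then remains to assemble the conclusion. The hypotheses on $q$ and $\ell$ (odd $q$, the divisibility $\ell\mid q^s-1$, and the size bound $7\ell\le q^\ell-1$) guarantee $n=2k$ and $\ell\le\lfloor\frac{k-1}{3}\rfloor$, so Theorem~\ref{thmcont:5.2} applies and yields that $\mathcal{C}$ is self-dual. The parameters follow at once: $n=q^s-\ell$ (resp.\ $q^s-\ell-1$) and $k=n/2$; and since $\mathcal{C}$ sits inside the GRS code $GRS_{k+\ell}(\boldsymbol{\alpha},\boldsymbol{v})$ one has $d(\mathcal{C})\ge n-(k+\ell)+1=\frac{n}{2}-\ell+1$, which equals $\frac{q^s-3\ell+2}{2}$ in the odd case and $\frac{q^s-3\ell+1}{2}$ in the even case. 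These last estimates are routine.
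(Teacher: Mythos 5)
Your proposal is correct and follows essentially the same route as the paper's proof: reduce everything to the self-duality criterion of Theorem~\ref{thmcont:5.2}, verify condition $(1)$ via $v_i^2=m'(\alpha_i)^{-1}=u_i$ (so $\lambda=1$), and verify condition $(2)$ by showing $\sigma_1=\cdots=\sigma_{\ell-1}=\sigma_{\ell+1}=\cdots=\sigma_{2\ell-1}=0$ and $\sigma_\ell=a$, which matches the prescription $\frac{1}{\eta_i}+\frac{1}{\eta_{\ell+1-i}}=a$. The only (cosmetic) difference is how the coefficients of $m(x)$ are obtained --- the paper compares coefficients recursively in $x^{q^s}-x=f(x)m(x)$, while you write down the closed form of $m(x)$ from the factorization of $x^{\ell t}-a^{t}$ --- and your treatment of the square roots $v_i$ and of the final distance bound is, if anything, slightly more explicit than the paper's.
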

\begin{proof}
$(1)$ If $\ell$ is odd, since $f(x)$ has $\ell$  roots in $\mathbb{F}_{q^s}$ and $(x^\ell-a,\ell x^{\ell-1})=1$, $f(x)$ has $\ell$ distinct roots in $\mathbb{F}_{q^s}$. Thus, $m(x)=\frac{x^{q^s}-x}{f(x)}$ has $q^s-\ell$ distinct roots in $\mathbb{F}_{q^s}$. Note that $m^{'}(\alpha_{i})\in \mathbb{F}_{q^s}$ has square roots in $\mathbb{F}_{q^{2s}}$. So there exist $v_{i}\in \mathbb{F}_{q^{2s}}$ such that $v_{i}^2=m^{'}(\alpha_{i})^{-1}=u_{i}$. Write $m(x)=\sum\limits_{i=0}^{q^s-\ell}m_{i}x^{q^s-\ell-i}$. Since $x^{q^s}-x=f(x)m(x)$, we have $m_{0}=1,m_{1}=\cdots=m_{\ell-1}=0,m_{\ell}=a\cdot m_{0}=a,m_{\ell+1}=am_{1}=0,\cdots,m_{2\ell-1}=am_{\ell-1}=0$. On the other hand, from the constructions of $\eta_{1},\cdots,\eta_{\ell}\in \mathbb{F}_{q}^{*}$, it is easy to see that they satisfy $\frac{1}{\eta_{i}}+\frac{1}{\eta_{\ell+1-i}}=a,i=1,\cdots,\frac{\ell+1}{2}$. Therefore, by Theorem~\ref{thmcont:5.2} and $3\ell\leq \frac{q^\ell-\ell}{2}=k$,  $\mathcal{C}=ev_{\boldsymbol{\alpha},\boldsymbol{v}}(\mathcal{S})$ is a self-dual code of length $q^s-\ell$ over $\mathbb{F}_{q^{2s}}$. Furthermore, it is obvious that $d(\mathcal{C})\geq\frac{q^s-3\ell+2}{2}$.
	
$(2)$ If $\ell$ is even, by the same argument, we can easily prove that $\mathcal{C}=ev_{\boldsymbol{\alpha},\boldsymbol{v}}(\mathcal{S})$ is a $[q^s-\ell-1,\frac{q^s-\ell-1}{2},\geq \frac{q^s-3\ell+1}{2}]$ self-dual code over $\mathbb{F}_{q^{2s}}$.
\end{proof}
 
 \begin{example}
 	$(1)$\ Let $q=13,\ell=3$, $\boldsymbol{\alpha}=\left(0,1,2,3,4,5,6,9,10,12\right)$ and $f(x)=x^3-5$. Since the polynomial $f(x)$ factors as $(x+2)(x-7)(x-8)$ in $\mathbb{F}_{13}$, the splitting field of $f(x)$ over $\mathbb{F}_{13}$ is still $\mathbb{F}_{13}$. It is easy to compute that $m(x)=\frac{x^{13}-x}{f(x)}=x^{10}+ 5x^7 + 12x^4 + 8x$. 
 	Let $\mathbb{F}_{13^2}^*=\langle\beta\rangle$, where the minimal polynomial of $\beta$ over $\mathbb{F}_{13}$ is $x^2+7x+2$. Choose $\boldsymbol{v}=\left(\beta^{63},2,6,2,\beta^{35},6,6,2,\beta^{35},\beta^{35}\right)$, then $v_{i}^2=m^{'}(\alpha_{i})^{-1},1\leq i\leq 10$. Let $\eta_{1}=2,\eta_{2}=3,\eta_{3}=6$ and let \begin{equation*}
 	\mathcal{S}=\left\{\sum\limits_{i=0}^{4}f_{i}x^i+2f_{2}x^5+3f_{3}x^6+6f_{4}x^7:\mbox{for all}\  f_{i}\in\mathbb{F}_{13^2},0\leq i\leq 4\right\}.
 	\end{equation*}
 	By Theorem~\ref{thmcont:5.4}, $\mathcal{C}=ev_{\boldsymbol{\alpha},\boldsymbol{v}}(\mathcal{S})$ is self-dual. Together with Example~\ref{exacont:3.7}, the TGRS code $\mathcal{C}=ev_{\boldsymbol{\alpha},\boldsymbol{v}}(\mathcal{S})$ is a self-dual MDS code with parameters $[10,5,6]$.
 	
 	$(2)$\ Let $q=13,\ell=4$, $\boldsymbol{\alpha}=\left(1,4,5,6,7,8,9,12\right)$, and $f(x)=x^4-3$. Since the polynomial $f(x)$ factors as $(x-2)(x-3)(x-10)(x-11)$ in $\mathbb{F}_{13}$, the splitting field of $f(x)$ over $\mathbb{F}_{13}$ is still $\mathbb{F}_{13}$. One can easily show that $m(x)=\frac{x^{13}-x}{xf(x)}=x^8+3x^4+9$. Since $v_{i}^2=m^{'}(\alpha_{i})^{-1}$, then we have $\boldsymbol{v}^2=\left(2,2,10,3,10,3,11,11\right)$.
 	Let $\mathbb{F}_{13^2}^*=\langle\beta\rangle$, where the minimal polynomial of $\beta$ is $x^2+7x+2$. Choose $\boldsymbol{v}=\left(\beta^7,\beta^{7},6,4,6,4,\beta^{49},\beta^{49}\right)$, then $v_{i}^2=m^{'}(\alpha_{i})^{-1},1\leq i\leq 8$. Let $\eta_{1}=1,\eta_{2}=3,\eta_{3}=2,\eta_{4}=7$ and let \begin{equation*}
 	\mathcal{S}=\left\{\sum\limits_{i=0}^{3}f_{i}\left(x^i+\eta_{i+1}x^{4+i}\right)
 	:\mbox{for all}\  f_{i}\in\mathbb{F}_{13^2},0\leq i\leq 3\right\}.
 	\end{equation*}
 	By Theorem~\ref{thmcont:5.4}, the TGRS code $\mathcal{C}=ev_{\boldsymbol{\alpha},\boldsymbol{v}}(\mathcal{S})$ is self-dual. Furthermore, the TGRS code $\mathcal{C}=ev_{\boldsymbol{\alpha},\boldsymbol{v}}(\mathcal{S})$ is indeed a self-dual MDS code with parameters $[8,4,5]$.
 \end{example}

\section{Conclusion}\label{sect:6}
In this paper, we have characterized a sufficient and necessary condition that a TGRS code with $\ell$ twists is MDS, AMDS, NMDS or $\ell$-MDS for $\ell\leq \min\{k,n-k\}$. Also, we have determined a sufficient and necessary condition that a TGRS code with $\ell$ twists is self-dual for $\ell\leq \lfloor \frac{k-1}{3}\rfloor$, and given an explicit construction of self-dual TGRS code.

\bibliographystyle{plain}
\bibliography{references-sMDS}

\end{document}